\newcommand{\cmark}{\text{\ding{51}}}%
\newcommand{\xmark}{\text{\ding{55}}}%
\DeclareMathAlphabet{\mathpzc}{OT1}{pzc}{m}{it}
\newcommand{\nat}{{\omega}}
\newcommand{\Real}{{\mathbb{R}}}
\newcommand{\AsRule}[1]{{{\textsc{#1}}}}
\newcommand{\AsFunction}[1]{{\mathtt{#1}}}
\newcommand{\Pack}{\AsFunction{{pack}}}
\newcommand{\Lvl}{\AsFunction{{lvl}}}
\newcommand{\SomeElement}[1]{
	\mathtt{
	\ifthenelse{\equal{#1}{1}}
		{{x}}
		{\ifthenelse{\equal{#1}{2}}
		{{y}}
		{\ifthenelse{\equal{#1}{3}}
		{{z}}
		{{{#1}}}}}
	}
}
\newcommand{\SomeState}[1]{
	\mathtt{
	\ifthenelse{\equal{#1}{1}}
		{{s}}
		{\ifthenelse{\equal{#1}{2}}
		{{s'}}
		{\ifthenelse{\equal{#1}{3}}
		{{s''}}
		{{{#1}}}}}
	}
}
\newcommand{\SomeSet}[1]{
	{\ifthenelse{\equal{#1}{1}}
		{X}
		{\ifthenelse{\equal{#1}{2}}
		{Y}
		{\ifthenelse{\equal{#1}{3}}
		{Z}
		{{#1}}}}
	}
}
\newcommand{\SomePredicate}[1]{
	{\ifthenelse{\equal{#1}{1}}
		{{\AsRule{P}}}
		{\ifthenelse{\equal{#1}{2}}
		{{\AsRule{Q}}}
		{\ifthenelse{\equal{#1}{3}}
		{{\AsRule{R}}}
		{{#1}}}}
	}
}
\newcommand{\SomeRelation}[1]{
	{\ifthenelse{\equal{#1}{1}}
		{{R}}
		{\ifthenelse{\equal{#1}{2}}
		{{S}}
		{\ifthenelse{\equal{#1}{3}}
		{{T}}
		{{#1}}}}
	}
}
\newcommand{\SomeFunction}[1]{
	{\ifthenelse{\equal{#1}{1}}
		{f}
		{\ifthenelse{\equal{#1}{2}}
		{g}
		{\ifthenelse{\equal{#1}{3}}
		{h}
		{{#1}}}}
	}
}
\newcommand{\SomeCategory}[1]{
	{\ifthenelse{\equal{#1}{1}}
		{{\mathbf{C}}}
		{\ifthenelse{\equal{#1}{2}}
		{{\mathbf{D}}}
		{\ifthenelse{\equal{#1}{3}}
		{{\mathbf{E}}}
		{{\mathbf{#1}}}}}
	}
}
\newcommand{\AsFunctor}[1]{\mathpzc{#1}}
\newcommand{\SomeFunctor}[1]{
	{
	\ifthenelse{\equal{#1}{1}}
		{\AsFunctor{F}}
		{\ifthenelse{\equal{#1}{2}}
		{\AsFunctor{G}}
		{\ifthenelse{\equal{#1}{3}}
		{\AsFunctor{H}}
		{\AsFunctor{{#1}}}}}
	}
}
\newcommand{\AsAlgebra}[1]{
	{\mathfrak{{#1}}}
}
\newcommand{\SomeAlgebra}[1]
{
	{\ifthenelse{\equal{#1}{1}}
	{\AsAlgebra{A}}
	{\ifthenelse{\equal{#1}{2}}
		{\AsAlgebra{B}}
		{\ifthenelse{\equal{#1}{3}}
			{\AsAlgebra{C}}
			{\PackageWarning{Preamble}{non-standard algebra symbol}#1}
		}
	}
	}
}
\newcommand{\AsCoalgebra}[1]{
	{\mathbb{{#1}}}
}
\newcommand{\SomeCoalgebra}[1]{
	{\AsCoalgebra{\MakeUppercase{\SomeSet{#1}}}}
}
\newcommand{\SomeInput}[1]
{
	{\ifthenelse{\equal{#1}{1}}
	{\mathtt{a}}
	{\ifthenelse{\equal{#1}{2}}
		{b}
		{\ifthenelse{\equal{#1}{3}}
			{c}
			{\PackageWarning{Preamble}{non-standard input symbol}#1}
		}
	}
	}
}
\newcommand{\SomeSetOfInputs}[1]
{
	\ifthenelse{\equal{#1}{1}}
	{{A}}
	{\ifthenelse{\equal{#1}{2}}
		{B}
		{\ifthenelse{\equal{#1}{3}}
			{C}
			{\PackageWarning{Preamble}{non-standard set of inputs symbol}#1}
		}
	}
}
\newcommand{\High}{\mathcal{H}}
\newcommand{\Low}{\mathcal{L}}
\newcommand{\SomeSequenceOfInputs}[1]{
{
	{\ifthenelse{\equal{#1}{1}}
	{w}
	{\ifthenelse{\equal{#1}{2}}
		{u}
		{\ifthenelse{\equal{#1}{3}}
			{v}
			{\PackageWarning{Preamble}{non-standard sequence of inputs symbol}#1}
		}
	}
}}
}
\newcommand{\SomeSetOfSequencesOfInputs}[1]{\SomeSetOfInputs{#1}^{*}}
\newcommand{\SomeStreamOfInputs}[1]{
	\ifthenelse{\equal{#1}{1}}
		{\sigma}
		{\ifthenelse{\equal{#1}{2}}
		{\tau}
		{\ifthenelse{\equal{#1}{3}}
		{\rho}
		{{#1}}}}
	}
\newcommand{\SomeSetOfStreamsOfInputs}[1]{\SomeSetOfInputs{#1}^{\nat}}
\newcommand{\SomeOutput}[1]
{
	\ifthenelse{\equal{#1}{1}}
	{b}
	{\ifthenelse{\equal{#1}{2}}
		{p}
		{\ifthenelse{\equal{#1}{3}}
			{q}
			{\PackageWarning{Preamble}{non-standard output symbol}#1}
		}
	}
}
\newcommand{\SomeSetOfOutputs}[1]
{
	\ifthenelse{\equal{#1}{1}}
	{{B}}
	{\ifthenelse{\equal{#1}{2}}
		{P}
		{\ifthenelse{\equal{#1}{3}}
			{Q}
			{\PackageWarning{Preamble}{non-standard set of outputs symbol}#1}
		}
	}
}
\newcommand{\SomeSequenceOfOutputs}[1]
{
	\ifthenelse{\equal{#1}{1}}
	{\underline{w}}
	{\ifthenelse{\equal{#1}{2}}
		{\underline{u}}
		{\ifthenelse{\equal{#1}{3}}
			{\underline{v}}
			{\PackageWarning{Preamble}{non-standard sequence of outputs symbol} \underline{#1}}
		}
	}
}
\newcommand{\SomeSetOfSequencesOfOutputs}[1]{\SomeSetOfOutputs{#1}^{*}}
\newcommand{\SomeException}[1]
{
	\ifthenelse{\equal{#1}{1}}
	{e}
	{\ifthenelse{\equal{#1}{2}}
		{d}
		{\ifthenelse{\equal{#1}{3}}
			{c}
			{\PackageWarning{Preamble}{non-standard Exception symbol}#1}
		}
	}
}
\newcommand{\SomeSetOfExceptions}[1]
{
	\ifthenelse{\equal{#1}{1}}
	{E}
	{\ifthenelse{\equal{#1}{2}}
		{F}
		{\ifthenelse{\equal{#1}{3}}
			{G}
			{\PackageWarning{Preamble}{non-standard set of exceptions symbol}#1}
		}
	}
}
\newcommand{\AsSemiring}[1]{\AsCoalgebra{#1}}
\newcommand{\SomeSemiring}[1]{\AsSemiring{\MakeUppercase{\SomeSet{#1}}}}
\newcommand{\SomeFPS}[1]{
	\ifthenelse{\equal{#1}{1}}
		{\sigma}
		{\ifthenelse{\equal{#1}{2}}
		{\gamma}
		{\ifthenelse{\equal{#1}{3}}
		{\rho}
		{{#1}}}}
	}
\newcommand{\SomeBehaviour}[1]{{\SomeFPS{#1}}}
\newcommand{\TheSystem}{{\TheBehaviourOf{\cdot}}}
\newcommand{\TheBehaviourOf}[1]{{{\llbracket#1\rrbracket}}}
\newcommand{\AsSequence}[1]{{{\left(#1\right)}}}
\newcommand{\AsTuple}[1]{{{\left(#1\right)}}}
\newcommand{\da@rightarrow}{\mathchar"0\hexnumber@\symAMSa 4B }
\newcommand{\da@leftarrow}{\mathchar"0\hexnumber@\symAMSa 4C }
\newcommand{\xdashedrightarrow}[2][]{%
  \mathrel{%
    \mathpalette{\da@xarrow{#1}{#2}{}\da@rightarrow{\,}{}}{}%
  }%
}
\newcommand{\xdashedleftarrow}[2][]{%
  \mathrel{%
    \mathpalette{\da@xarrow{#1}{#2}\da@leftarrow{}{}{\,}}{}%
  }%
}
\newcommand{\da@xarrow}[7]{%
  \sbox0{$\ifx#7\scriptstyle\scriptscriptstyle\else\scriptstyle\fi#5#1#6\m@th$}%
  \sbox2{$\ifx#7\scriptstyle\scriptscriptstyle\else\scriptstyle\fi#5#2#6\m@th$}%
  \sbox4{$#7\dabar@\m@th$}%
  \dimen@=\wd0 %
  \ifdim\wd2 >\dimen@
    \dimen@=\wd2 %
  \fi
  \count@=2 %
  \def\da@bars{\dabar@\dabar@}%
  \@whiledim\count@\wd4<\dimen@\do{%
    \advance\count@\@ne
    \expandafter\def\expandafter\da@bars\expandafter{%
      \da@bars
      \dabar@ 
    }%
  }%
  \mathrel{#3}%
  \mathrel{%
    \mathop{\da@bars}\limits
    \ifx\\#1\\%
    \else
      _{\copy0}%
    \fi
    \ifx\\#2\\%
    \else
      ^{\copy2}%
    \fi
  }%
  \mathrel{#4}%
}
\providecommand*{\twoheadrightarrowfill@}{%
  \arrowfill@\relbar\relbar\twoheadrightarrow
}
\providecommand*{\twoheadleftarrowfill@}{%
  \arrowfill@\twoheadleftarrow\relbar\relbar
}
\providecommand*{\xtwoheadrightarrow}[2][]{%
  \ext@arrow 0579\twoheadrightarrowfill@{#1}{#2}%
}
\providecommand*{\xtwoheadleftarrow}[2][]{%
  \ext@arrow 5097\twoheadleftarrowfill@{#1}{#2}%
}
    \author{Eric Rothstein Morris, Carlos G. Murguia, Mart\'in Ochoa}
\affiliation{
  \institution{Singapore University of Technology and Design}            
}
 \email{eric_rothstein,murguia_rendon,martin_ochoa@sutd.edu.sg}         
\newcommand{\Nat}{{\mathbb {N}}}
\begin{document}
\title{Design-Time Quantification of Integrity in Cyber-Physical-Systems } 

\begin{abstract}
In a software system it is possible to quantify the amount of information that 
is leaked or corrupted by analysing the flows of information present in the 
source code. In a cyber-physical system, information flows are not only present 
at the digital level, but also at a physical level, and to and fro the two 
levels. 
In this work, we provide a methodology to formally analyse a 
Cyber-Physical System composite model (combining physics and control) using an 
information flow-theoretic approach. We use this approach to quantify the level 
of vulnerability of a system with respect to attackers with different 
capabilities. We illustrate our approach by means of a water distribution case 
study. 
\end{abstract}

\begin{CCSXML}
<ccs2012>
<concept>
<concept_id>10002978.10002986.10002989</concept_id>
<concept_desc>Security and privacy~Formal security models</concept_desc>
<concept_significance>500</concept_significance>
</concept>
<concept>
<concept_id>10002978.10002986.10002990</concept_id>
<concept_desc>Security and privacy~Logic and verification</concept_desc>
<concept_significance>500</concept_significance>
</concept>
<concept>
<concept_id>10002978.10003006.10011608</concept_id>
<concept_desc>Security and privacy~Information flow control</concept_desc>
<concept_significance>500</concept_significance>
</concept>
</ccs2012>
\end{CCSXML}

\ccsdesc[500]{Security and privacy~Formal security models}
\ccsdesc[500]{Security and privacy~Logic and verification}
\ccsdesc[500]{Security and privacy~Information flow control}

\keywords{Information Flow, Cyber-physical Systems, Control Theory, Non-interference} 

\maketitle


\section{Introduction}
A \emph{cyber-physical system} (CPS) is a system that intertwines components from the physical and digital worlds. Some examples of CPSs include: cars, aircrafts, water treatment plants, industrial control systems, and critical infrastructures. Security violations in a safety-critical CPS has notable effects in the physical world, \emph{e.g.,} in late 2007 or early 2008, the Stuxnet attack against an Iranian control system allegedly sabotaged centrifuges in uranium enrichment plants, causing them to rapidly deteriorate \cite{StuxnetWeb,Stuxnet}, and in 2014, hackers struck a steel mill in Germany and disrupted the control system, which prevented a blast furnace from properly shutting down, causing massive damage to the facility \cite{WiredArticle,Lagebericht2014}.

Although in the scientific literature on information security \emph{confidentiality} has traditionally enjoyed more attention than \emph{integrity}, as observed for instance by Clark and Wilson \cite{ClarkWilson87}: ``in the commercial environment, preventing disclosure is often important, but preventing unauthorized data modification is usually paramount.''
This holds particularly true for many Industrial Control Systems (ICSs) at yet another level: their security priority is not the protection of confidential data, but the protection of their physical assets.  Gollmann and Krotofil reinforce this paradigm in \cite{CPSSec}, stating that the traditional CIA (Confidentiality-Integrity-Availability) triad should be reversed when studying the security of CPSs
. They argue that the enforcement of integrity in CPSs should not only consider the ``traditional approach for IT systems'', \emph{i.e.}, protecting \emph{component logic} and \emph{communication}, but that we must also protect the integrity of {observations}; more precisely, we have to protect the \emph{veracity} of sensor data and check its \emph{plausibility}.

{Traditionally, the enforcement of CPS integrity has focused on guaranteeing that systems perform as their designers intended. This is usually evaluated using \emph{control-theoretic methodologies}, specifically \emph{fault-detection techniques}. Control theory researchers model attacks to CPSs as time-series with specific structures affecting sensor measurements and/or control signals. Depending on their capabilities, attackers have the power to control when, where, and how attacks are induced to the system. Research on attack detection and mitigation has mainly focused on the so-called \emph{integrity attacks}, i.e., attacks that put at risk the proper operation and physical integrity of CPSs. Integrity attacks include stealthy attacks \cite{CPSStealthAttacks}, message replay \cite{CPSReplayAttacks}, covert attacks \cite{CPSCovertAttacks}, and false-data injection \cite{CPSDataInjectionAttacks}, among others. Due to their focus on fault-detection techniques, many results based on control theory aim to protect CPSs from integrity attacks by relying on {monitoring} the physics of the system to detect anomalies (see, e.g., \cite{CPSInvariantsForDetection,LimitingImpactStealthyAttacks,CPSAttackDetection,CPSAttacksAgainstPCS,CPSIntegrityAttacks,CPSDetectingIntegrityAttacksScada}).
 }

{
In sum, most control-theoretical approaches to CPS security assume the
existence of a probabilistic behavioural model that is used as a
``white-list'' of normal behaviour (attack-free), and their goal is to monitor and protect the integrity of the system with respect to this ideal model at runtime, when attacks might arise. There are some limitations with this approach. On the one hand, most of the work following this approach is reactive, and as such it does not shed light on how to improve a given CPSs design in order to make it more resilient to attacks (although there are some recent results on \emph{redesigning controllers and models} to improve robustness of CPSs against attacks, e.g., \cite{Carlos_Justin3,Weerakkody}). On the other hand, an inherent limitation of many behavioural models for CPS is that they usually are approximate due to linearisation, and might produce a high false positive rate when used in practice; also, many works (e.g., \cite{LimitingImpactStealthyAttacks,CPSDetectingIntegrityAttacksScada,IFCPSSec}) rely on the assumption that it is possible to determine whether the system is operating normally or under attack, but there could be non-malicious deviations of a given behavioural model due to routine maintenance operations or other random factors.
Moreover, only few approaches (such as \cite{IFCPSSec,Gupta2,Carlos_Justin3}) quantify the severity and consequences of attacks on the system.
}

Information flow analysis (IFA) has traditionally focused on determining whether sensitive/secret information flows to where it is not intended to. There is a large body of literature on the application of IFA for the attestation of {data confidentiality}, including \cite{BellLapadula,SecureInformationFlows,QIF}, among others. Although it is well known that conceptually integrity can be seen as the dual of confidentiality, arguably the attestation of integrity has not received much attention from the IFA community. We thus suggest, as an interesting application scenario of IFA, to develop an alternative and complementary approach to control theory to reason
about CPS security based on well-known computer science foundations for
integrity. We make
the fundamental observation that in practice we often want to protect the
integrity of the state of a (physical) \emph{process variable} (i.e. the level 
of water of a tank, the temperature of a certain device or material, the 
chemical concentration of a given medium etc.). We propose then to model a CPS 
as the composition of a cyber (digital) state-machine modelling the controller 
logic, and the corresponding expected physical world reaction. In this setting 
we can apply information-flow inspired definitions and techniques to formally 
assess and quantify the effects of an attacker controlling certain aspects of 
the system (sensors, actuators, or  communication channels) on the critical 
physical variables we intend to protect. 

Although our approach can shed light on the dependencies between
certain inputs and certain critical physical states of the system, we remark that our analysis
assumes some reasonable limitations on the considered attackers. In particular, we assume
that attackers control only a subset of sensors/actuators and that attackers do
not subvert the controller's logic. Clearly, if attackers control all sensors
and actuators or are able to replace the control logic (as in the case of
Stuxnet), then they can drive a CPS to any desired state (unless orthogonal
controls are in place). However, we believe that the analysis proposed in this work can be
useful to decide where and if to include redundancy in the number of sensors
and/or in the control logic, in order to make attacks more difficult to perform. 

{
Given that our
approach is intended to verify the security of CPSs at design-time, we make assumptions that might be
unrealistic when considered in an approach that monitors security at runtime (in particular, we assume that we know the state of the
controller and of the physical process). Nevertheless, we consider our assumptions to be realistic at design-time, given the existence of a models that
describe the dynamical systems (where states are explicitly described), and given the existence of a state machine that models the controller (where the current state is usually known).
}

\textbf{Problem statement:} To summarize, in this work we address the following
questions: \emph{Can we identify at design time which inputs to a CPS are most
harmful if controlled by an attacker? Can we improve the design of the CPS
at design time and justify our models formally with respect to a precise
integrity notion?}

\textbf{Approach:} We propose to model a CPS as a composition of the control
logic and the expected systems behaviour in terms of a finite and deterministic
state machine. We then formally quantify the impact of various attackers
(controlling physical or logical aspects of the system) to the integrity
of critical physical states by means of information flow analysis.

\textbf{Contributions:} We make the following contributions \textbf{a)} We present a logical framework to formally reason about integrity in CPS models,
reconciling information flow analysis and control theoretical aspects. \textbf{b)} We discuss how our approach can be used to spot and quantify harmful information flows in CPS. \textbf{c)} We illustrate the usefulness of
our approach by means of simple but realistic models concerning a water
distribution CPS and show that we can identify non-trivial integrity-harming flows
in a CPS.


{
}


\section{Control Theory Preliminaries}
Gollmann and Krotofil highlight in \cite{CPSSec} that ``to work on cyber-physical systems, one has to combine the devious mind of the security expert with the technical expertise of the control engineer.'' Unfortunately, this expertise is ``not commonly found in the IT security community,'' according to the authors of \cite{CPSSecVinyl}. Moreover, this lack of expertise is worsened by the {language barrier} between the disciplines of control theory and IT security. Thus, in this section, we present the model-based techniques for CPSs security broadly used by the systems and control community.
\subsection{Linear Time-Invariant Models.}
During the last decade, there has been an increasing tendency to use physics-based models of CPSs to detect and quantify the effect of attacks on the system performance \cite{CPSAttacksAgainstPCS,LimitingImpactStealthyAttacks,CPSDetectingIntegrityAttacksScada,CPSIntegrityAttacks,Carlos_Justin1,Carlos_Justin2,Carlos_Justin3}. These physics-based models focus on the normal operation of the CPS and work as prediction models that are used to confirm that control commands and measurements are valid and plausible. Often, dynamical models of physical systems are approximated around their operation points or approximated using input-output data. These lead to approximated models which are often linear and time-invariant.

Following the work in \cite{CPSAttacksAgainstPCS,LimitingImpactStealthyAttacks,CPSDetectingIntegrityAttacksScada,CPSIntegrityAttacks,Carlos_Justin1,Carlos_Justin2,Carlos_Justin3}, here, we only consider Linear-Time Invariant (LTI) models, although the same ideas are employed when considering more complicated dynamics. In particular, a model of a CPS that uses LTI stochastic difference equations is of the form
\begin{equation}
\left\{
\begin{array}{ll}
{x}(t_{k+1}) = Ax(t_k) + B u(t_k) + v(t_k),  \label{1} \\[1mm]
\hspace{3.8mm} y(t_k) = Cx(t_k) + \eta(t_k),
\end{array}
\right.
\end{equation}
with $k\in \Nat$, sampling time-instant $t_k$, physical state of the system $x \in \Real^n$ (i.e., an $n$-dimentional vector of physical variables associated with the dynamics of the CPS), sensor measurements $y := (y_1,\ldots,y_m)^T \in \Real^m$ , control signals $u := (u_1,\ldots,u_m)^T \in \Real^l$, real-valued matrices $A$, $B$, and $C$ of appropriate dimensions, and i.i.d. multivariate zero-mean Gaussian noises $v \in \Real^n$ and $\eta \in \Real^m$ with covariance matrices $R_{1} \in \Real^{n \times n}$ and $R_2 \in \Real^{m \times m}$, respectively. The matrix $A$ describes the \emph{dynamic evolution of the physical process}, $B$ is used to model the effect of \emph{actuators} on the system dynamics, and the matrix $C$ models the part of the state $x_k$ available from \emph{sensor measurements}.

\subsection{Attacks on Models}
\label{sec:AttacksOnModels}
At the time-instants $t_k,k \in \Nat$, the output of the process $y(t_k)$ is sampled and transmitted over a communication network. The received output  is used to compute control actions $u(t_k)$ which are sent back to the physical process. The complete control-loop is assumed to be performed instantaneously, i.e., the sampling, transmission, and arrival time-instants are supposed to be equal.
In between transmission and reception of sensor data and control commands, an attacker may replace the signals coming from the sensors to the controller and from the controller to the actuators, acting as a \emph{Man-in-the-Middle}. Thus, after each transmission and reception, the attacked output $\bar{y}$ and attacked input $\bar{u}$ take the form
\begin{equation}
\left\{
\begin{array}{ll}
\bar{y}(t_k) := y(t_k) + \delta^y(t_k) 
, \label{3}\\[1mm]
\bar{u}(t_k) := u(t_k) + \delta^u(t_k),
\end{array}\right.
\end{equation}
where $\delta^y(t_k) \in \Real^m$ and $\delta^u(t_k) \in \Real^l$ denote \emph{additive sensor and actuator attacks}, respectively.

Henceforth, we denote $x(t_k)$ by $x_k$, $ \bar{u}(t_k)$ by $\bar{u}_k$, $v(t_k)$ by $v_k$, $\bar{y}(t_k)$ by $\bar{y}_k$, $\eta(t_k)$ by $\eta_k$, $\delta^y(t_k)$ by $\delta_k^y$, and $\delta^u(t_k)$ by $\delta^u_k$. Then, a system under attack is modelled by
\begin{equation}
\left\{
\begin{array}{ll}
{x}_{k+1} = A x_k + B (u_k + \delta^u_k) + v_k,\label{17} \\
\text{ \ \ }\hspace{.35mm}\bar{y}_k = C x_k + \eta_k + \delta^y_k.
\end{array}
\right.
\end{equation}

\subsection{State Estimation}

The vast majority of work on attack detection uses fault detection techniques \cite{Marios_Poly,Patton_1} to identify attacks. The main idea behind fault detection theory is the use of an estimator to forecast the evolution of the system. If the difference between what it is measured and the estimation is larger than expected, then there may be a fault/attack on the system.

For LTI CPSs of the form \eqref{1}, to estimate the state of the physical process, people from the systems and control community use mainly two types of estimators: the \emph{Kalman filter} \cite{Astrom} and the \emph{Luenberger observer} \cite{1098323}. Here, we consider Luenberger observers of the form
\begin{equation}\label{19}
\left\{ \begin{array}{ll}
\hat{x}_{k+1} = A \hat{x}_k + B\bar{u}_k + L \big( \bar{y}_k - C\hat{x}_k   \big),\\
\hspace{3.0mm} \hat{y}_k = C \hat{x}_k,
\end{array} \right.
\end{equation}
with estimated state $\hat{x}_k \in \Real^n$ and \emph{observer gain matrix} $L \in \Real^{n \times m}$ which must be designed to ensure that the estimated state $\hat{x}_k$ converges to the actual state $x_k$ asymptotically, i.e., $L$ is selected such that $\lim_{k \rightarrow \infty} (x_k - \hat{x}_k) = 0$. Define the \emph{estimation error} $e_k:= x_k - \hat{x}_k$ and the \emph{residual sequence} $r_k \in \Real^m$ as
\begin{align}
r_k := \bar{y}_k - C\hat{x}_k = Ce_k + \eta_k + \delta_k^y. \label{25}
\end{align}
Given the system dynamics (\ref{17}) and the observer (\ref{19}), the estimation error dynamics and the residual sequence are governed by the difference equation:
\begin{equation}
\left\{
\begin{array}{ll}
e_{k+1} = \big( A - LC \big) e_k  + v_k - L\eta_k  - L \delta_k^y + B\delta_k^u,  \label{26} \\[.5mm]
\text{ \ \ }\hspace{.5mm} r_k = Ce_k + \eta_k + \delta_k^y.
\end{array}
\right.
\end{equation}
At this point, having introduced the residual dynamics \eqref{26}, the main idea behind residual-based (model-based) attack/fault detection procedures is to characterize the `behavior' of the residual sequence \emph{a priori} in the attack-free case. Then, we can test (in real-time) whether the actual behavior of the CPS matches the attack-free one. If it is not the case, alarms are raised indicated a possible fault/attack on the system. More precisely, the \emph{asymptotic density distribution} \cite{Ross} of the residual sequence is obtained given the system and observer matrices $(A,B,C,R_1,R_1,L)$; then, statistical change detection procedures (e.g., Cumulative Sum (CUSUM) \cite{Gustafsson,Page}, Genera\-lized Likelihood Ratio (GLR) testing \cite{Basseville}, Compound Scalar Testing (CST) \cite{Gertler}, etc.) are employed to test whether the residual $r_k$, at every time step $k$, belongs to this distribution. If it is not the case, alarms are raised. For transparency, we do not go into further details about these ideas. However, for the interested reader, we include some extra material about these techniques in the appendix.


\section{Quantitative Integrity Analysis on CPS}
Traditional IT security mechanisms such as authentication and message integrity are insufficient for CPS security. However, as stated in \cite{CPSAttacksAgainstPCS}, the major distinction of control systems with respect to other IT systems is the interaction with the physical world. In \cite{CPSSecVinyl}, the authors state that CPSs security is {specifically} concerned with attacks that cause physical impact; since IT security mechanisms do not usually account for the physical part of the system, they are thus ineffective against attacks that either target or exploit the physical components of CPSs. In this section, we present an Information Flow (IF) inspired integrity analysis for CPSs which accounts for the physical process.


\subsection{A Discrete Model for CPSs}
Figure \ref{fig:IFCPS} presents the \emph{supervisor model} \cite{doi:10.1137/0325013}, which serves as the starting point to define an IF framework for CPSs.
\begin{figure}
\centering
\includegraphics[width=0.45\textwidth]{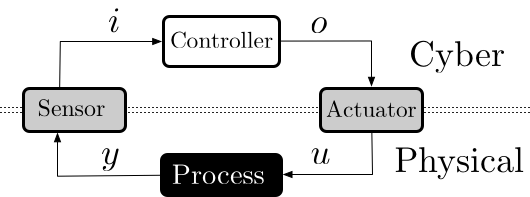}
\caption{Supervisor model}
\label{fig:IFCPS}
\end{figure}

The supervisor model consists of four systems: a discrete cyber system, \emph{i.e.}, the \emph{controller}, a continuous-time physical system \emph{i.e.}, the \emph{process}, and two cyber-and-physical subsystems \emph{i.e.}, the \emph{sensor} and the \emph{actuator}. The controller, whose state is $q$, receives a digital input $i$ and produces a digital control signal $o$ which is then transformed into an analog control signal $u$ by the actuator; in turn, the process, whose state is $x$, reacts to $u$ and yields an analog observation $y$, which is transformed into the digital input $i'$ by the sensor, and the cycle repeats. Ultimately, our objective is to protect the integrity of the physical variable observed by $y$; i.e., (a part of) the state $x$ of the process. 

To formally reason about this model, we define a notion of  state as follows. 
\begin{definition}[States of a CPS]
\label{def:States}
Let $i$ be the vector of digital inputs, $o$ be the vector of digital outputs, $u$ be the vector of analog control signals, $q$ be the current state of the controller, $x$ be the current state of the process, and $y$ be the vector of analog observations. A \emph{state of the CPS} is a tuple
\begin{align}
\sigma=((i,o,u),(q,x),y)
\end{align}
where $(i,o,u)$ is the \emph{controllable} part of $\sigma$, $(q,x)$ is the \emph{hidden} part of $\sigma$ and $y$ is the \emph{observable} part of $\sigma$. We assume that the behaviours of the actuator and of the sensor are time-invariant; thus we do not track their state.

Let $I$ be the set of all values that $i$ can take. Similarly, we define $O$, $U$, $Q$, $X$ and $Y$ as the sets of all values that $o$, $u$, $q$ $x$ and $y$ can take, respectively. We define the set of all states $\Sigma=I\times O\times U\times Q \times X \times Y$.
\end{definition}
\begin{example}[A Water Tank]
\label{exampleTank}
Consider the water tank with one input valve and one output valve shown in Figure \ref{fig:Tank}. The tank has a capacity of $L$ litres of water, and a water level sensor measures the current level of water in the tank. A very simple LTI model of the process is 
\begin{align}
\label{eq:TankDynamics1}
{x}_{k+1}&= {x}_{k} + 5{u}^{in}_{k}-3{u}^{out}_{k},\\
\label{eq:TankDynamics2}
{y}_{k}&= {x}_{k},
\end{align}
subject to $0\leq {x}_{k}\leq L$. 
If the ${in}$ valve is open, then $5$ litres of water flow into the tank, and if the ${out}$ valve is open, then $3$ litres of water flow out the tank. For simplicity, there is no noise, and the only part of the state that we are interested in is the water level (not temperature or pressure, for example).

The controller of the tank has a single state $*$, and it receives a single input $i$, \emph{i.e.}, the level of the tank in digital format, and issues an output of the form $(o^{in}, o^{out})$ where $o^{in}$ and $o^{out}$ are each either ${open}$ or ${close}$, which respectively represent the actions of opening and closing the ${in}$ and ${out}$ valves. Opening valve $v$ at time $k$ forces $u^{v}_k$ to be $1$ and closing valve $v$ at time $k$ forces $u^{v}_k$ to be $0$. A state for this system at time $k$ could be $\AsTuple{(0,(open,close),(1,0)),(*,3),3}$.

\begin{figure}
\centering
\includegraphics[scale=0.5]{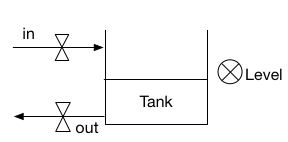}
\caption{Water Tank Model. We denote valves by $\bowtie$ and sensors by $\otimes$.}
\label{fig:Tank}
\end{figure}
\end{example}
We model the behaviour of the sensor, the actuator, the controller and the process as a set of black-box functions, which we compose in order to define the dynamics of the CPS.

{
\begin{definition}[Single-Cycle Semantics]
\label{def:SingleCycleSemantics}
Let $\Sigma=I\times O\times U\times Q \times X \times Y$ be the set of states of the CPS. Now, consider the following functions:
\begin{itemize}
\item $\delta_{IO}\colon Q\times I \rightarrow Q$, the transition function of the controller,
\item $\gamma_{IO}\colon Q\times I \rightarrow O$, the output function of the controller,
\item $\gamma_{AD}\colon Y \rightarrow I$, the output function of the sensor,
\item $\gamma_{DA}\colon O \rightarrow U$, the output function of the actuator,
\item $\gamma_{\Phi}\colon X\times U \rightarrow Y$, the output function of the process,
\item $\delta_{\Phi}\colon X \times U \rightarrow X$, the transition function of the process.
\end{itemize}
We define the single-cycle semantics function $\TheSystem\colon \Sigma \rightarrow \Sigma$ that describes the evolution of the state $((i,o,u),(q,x),y)$ of the CPS as follows
\begin{align}
\TheBehaviourOf{(i,o,u),(q,x),y)}&=\left((i', o', u'), (q',x'),y' \right)
\end{align}
where 
\begin{gather}
i'=\gamma_{AD}(y), \quad o'=\gamma_{IO}(q,i), \quad u'= \gamma_{DA}(o)\\
q'=\delta_{IO}(q,i), \quad x'=\delta_{\Phi}(x,u)\\
y'=\gamma_{\phi}(x,u).
\end{gather}
The evolution over $k+1$ cycles is defined by iterating $\TheSystem$, \emph{i.e.},
\begin{align}\label{eq:kCycles}
\TheSystem^{k+1}&= \TheSystem \circ \TheSystem^{k}.
\end{align}
\end{definition}

}


\subsection{Integrity and Attacker Model}
We have yet to consider the presence of an attacker. There are three non-excluding cases: 1) if an attacker controls a sensor, then the process ${IO}$ receives a corrupted version of $i$, 2) if an attacker physically controls an actuator, then ${\Phi}$ receives a corrupted version of $u$, and 3) if an attacker acts as a Man-in-the-Middle between the controller and the actuator, then the actuator receives a corrupted version of $o$. To formally reason on the impact of such attacks, we characterise integrity and an attacker model.  

Attackers that control all sensor and/or control signals, or that can change the logic of controller can drive the process to any state they desire
; these attackers are too powerful to defend against. However, it may happen that an attacker that only controls a subset of sensors and actuators has enough power to drive the process to any state they want, and defending against these attackers requires the implementation of orthogonal controls and/or reimplementation of the controller's logic, and is thus desirable to detect these vulnerabilities at the design phase of the CPS.

Clarkson and Schneider state in \cite{QuantitativeIntegrity} that they know of no widely accepted definition of integrity, but they remark that an ``informal definition seems to be the `prevention unauthorized modification of information'." Clarkson and Schneider use two formal notions to characterise and quantify \emph{corruption}, \emph{i.e.}, the damage to integrity. One notion is \emph{contamination}, which is a generalisation of taint analysis that tracks information flows from untrusted inputs to outputs that are supposed to be trusted. The other notion is \emph{suppression}, which occurs when implementation outputs omit information about correct outputs (with respect to the specification). Contamination is the dual to information leakage under the Biba duality; however, there is no apparent Biba dual to suppression \cite{BibaIntegrity}. In this work, the intuition behind corruption is closer to that of contamination, and we assume that the system presents no suppression; \emph{i.e.}, the CPS and its components are correctly designed.


Goguen and Meseguer introduced the concept of \emph{noninterference} \cite{Noninterference}, which provides a formal foundation for the specification and analysis of security policies and the mechanisms to enforce them. In the case of integrity, noninterference for programs means that a variation of public inputs does not cause a variation of critical values. Noninterference is traditionally used to characterise unwanted flows from a ``high/secret'' security level $\High$ to a ``low/public'' security level $\Low$, as they violate the confidentiality of data; however, given the Biba duality, in the case of integrity we say that the inputs of the attacker belong to $\Low$, that critical values belong to $\High$, and that our objective is to prevent unwanted flows from $\Low$ to $\High$, as they violate the integrity of data. Noninterference-based confidentiality notions have been formally defined over state-based models (see \cite{Rushby92}), trace-based models (see \cite{Hyperproperties}), and semantics-based models (see \cite{LanguageBasedInformationFlowSecurity}) to name a few. In the following, we define a notion of (noninterference-based) integrity for CPS using a semantics-based model.

\begin{definition}[Attack]
Let $\Sigma=I\times O\times U\times Q \times X \times Y$ be the set of  states of the CPS. An \emph{attack} is a function $f\colon I\times O \times U \rightarrow I\times O \times U$ that changes the \emph{controllable components} of a given state $\sigma$ of the CPS. The extension of an attack on a single state can be naturally extended to an attack over a trace $\tau = (\sigma_0,\sigma_1, ...) \in \Sigma^*$, where the attacker can apply arbitrary functions $f_j$ on each 
state $\sigma_i$ of the trace. 

\end{definition}
This definition of attacks is inspired by Equation \eqref{3},  where attackers change the value of control signals and measurements. We remark that these attack functions do not directly change the current state of the controller nor the state of the process; instead, they attempt to model the interference caused by the attacker at either the cyber or physical level. For instance, if the CPS is in a state $((i,o,u),(q,x),y)$, then an attack on a sensor yields a state $((i',o,u),(q,x),y)$, and an attack on the network layer that communicates the controller with an actuator yields a state $((i,o',u),(q,x),y)$. This interference must propagate in order to change the state of the controller or the state of the process, which is why we believe performing an information flow analysis is both sensible and adequate in this case. 
\begin{example}[Attacking the Water Tank]
Consider the water tank from Example \ref{exampleTank}. An example attack on the water level sensor is characterised by the function that maps the controllable tuple $(i,o,u)$ to the tuple $(L,o,u)$; in this case, the attacker is going to fool the controller into thinking that the water tank is full. 
\end{example}

\begin{definition}[Attacker]
\label{def:Attacker}
Every attacker $\alpha$ controls a set of controllable components, \emph{i.e.}, a set of digital inputs, digital outputs, and/or analog control signals. We denote the set of the components that $\alpha$ controls by $C_\alpha$.
 The attacker $\alpha$ can perform an attack sequence $f_1, f_2, \ldots f_{k}$ over $k$ cycles that changes the semantics of Equation \eqref{eq:kCycles} to
\begin{align}\label{eq:newkCycle}
\TheSystem^{k+1}&= \TheSystem \circ f_{k}\circ\TheSystem^{k}.
\end{align}
%
\end{definition}
\begin{example}[An Attacker of the Water Tank]
Consider the water tank from Example \ref{exampleTank}. An example attacker $\alpha$ that only controls $u^{in}$ models an attacker that can physically open or close the $in$ valve at will, and the only means for this attacker to attack the system is through the physical manipulation of such valve.
\end{example}

The objective of an attacker is to control one or more of the \emph{process variables} of the CPS; \emph{e.g.}, an attacker wants to control how much water is in a tank so that she can overflow it. These physical variables are represented in our CPS model by the output of the process, $y$, which is a vector of their analogue observations. Ultimately, our notion of security will depend on the vector $y$. We now formally define security based on classical notions of noninterference.
\begin{definition}[$k$-Process Integrity for CPSs]
\label{def:NICPS}
Let $\alpha$ be an attacker.
We label each component controlled by $\alpha$, \emph{i.e.}, $c\in C_\alpha$ as $\Low$, and we label all other components of the state as $\High$. We use the expression $\Lvl(c)$ to denote the security level of the component $c$.

Now, let $\sigma$ and $\sigma'$ be  states of the CPS. We say that $\sigma$ and $\sigma'$ are \emph{$\High$-equivalent} iff they are equal in all $\High$-labeled components; formally, we say that  $\sigma =_{\High} \sigma'$  iff
\begin{align}
\forall c \in C \colon \Lvl(c)=\High \Rightarrow \sigma(c)=\sigma'(c).
\end{align}
We say that $\sigma$ and $\sigma'$ are \emph{process-equivalent}, denoted ${\sigma}\approx_{\Phi} {\sigma'}$ iff their process variables are equal; more precisely, if
\begin{align}{\sigma}=((i,o,u),(q,x),y),\text{ and}\\{\sigma'}=((i',o',u'),(q',x'),y'),
\end{align}
we say that ${\sigma}\approx_{\Phi}{\sigma'}$ iff $y = y'$.

Finally, for $k\in \Nat$ with $k>0$, 
we say that the CPS satisfies \emph{$k$-
process integrity against the attacker $\alpha$} if and only if, for all states $\sigma
$ and $\sigma'
$
, 
we have
\begin{align}
\label{eq:NoninterferenceCPS}
\sigma =_{\High} \sigma' \Rightarrow \TheBehaviourOf{\sigma}^k \approx_{\Phi} \TheBehaviourOf{\sigma'}^k.
\end{align}
\end{definition}
The previous definition is qualitative (either the system satisfies process integrity or it does not). However, it does not shed light on the level of damage that this integrity violation may imply. As with many security notions based on noninterference, an attacker might slightly deviate the behaviour of the process without any serious repercussions; thus, we are interested in a quantitative measure of security in order to precisely estimate the level of control that the attacker has over the process.
\begin{definition}[$k$-Controllability of an Attacker]
\label{def:QuantitativeIntegrity}
Let $\alpha$ be an attacker, let $S$ be the set of states of the CPS, and let $\sigma_0$ be the initial state of the CPS. We inductively define the set of states $\Sigma^{k}_\alpha$, for $k \in \Nat$, by
\begin{align}
\Sigma^0_\alpha &= \set{\sigma_0}\\
\Sigma^{k+1}_\alpha &= \Set{\TheBehaviourOf{\sigma}|\sigma \in \Sigma,  \sigma ' \in \Sigma^k_\alpha \colon \sigma =_\High \sigma'}.
\end{align}
In other words, for each element $\sigma' \in \Sigma^k_\alpha$, we find every $\High$-equivalent state $\sigma$, and we include the resulting state $\TheBehaviourOf{\sigma}$ in $\Sigma^{k+1}_\alpha$. Thus, the set $\Sigma^{k+1}_\alpha$ contains all reachable states after the $k^{th}$ cycle accounting for all attacks that $\alpha$ could use against the CPS during these $k$ cycles (according to the interaction between the attacker and the system shown in Equation \ref{eq:newkCycle}).

Now, consider the function $\pi_Y\colon \Sigma \rightarrow Y$ that projects the state $((i,o,u),(q,x),y)$ to $y$; we use the set 
\begin{align}
\pi_Y(\Sigma^{k})=\Set{\pi_Y(\sigma)|\sigma \in \Sigma^{k}}
\end{align}
to determine how many different values for $y$ the attacker $\alpha$ can force the system to have after $k$ cycles. The \emph{$k$-controllability of $\alpha$} defined by the number of elements in $\pi_Y(\Sigma^{k})$, \emph{i.e.}, $|\pi_Y(\Sigma^{k})|$, quantifies the degree of control that the attacker $\alpha$ has over the process after a period of $k$ cycles. 
\end{definition}

Note that this definition, being quantitative, allows us to compare the impact of different attackers on a CPS over well-defined control logics. Note also that however, whether this level of controllability is enough to drive the CPS to a \emph{critical state} requires an additional analysis as we will discuss in the following section. Nevertheless, as in the case of traditional \emph{leakage}, we observe that higher levels of controllability (that can be 
thought of as a dual of leakage, or \emph{corruption}), are in general more 
dangerous, since they imply a higher degree of separation from the intended  ``legitimate'' state.

We remark that we can show that a system does not satisfy process integrity against an attacker $\alpha$ if, for some $k$, the $k$-controllability of $\alpha$ is greater than $1$ , as shown in the following Corollary.
\begin{corollary}[From Controllability to Integrity]
Let $\alpha$ be an attacker. For $k \in \Nat$, if $|\pi_Y(\Sigma_\alpha^{k})|>1$, then the system does not satisfy $k$-process integrity against $\alpha$.
\end{corollary}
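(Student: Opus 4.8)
The plan is to give a direct proof: assuming $|\pi_Y(\Sigma_\alpha^{k})|>1$, I will exhibit an explicit pair of states that falsifies the defining implication \eqref{eq:NoninterferenceCPS} of $k$-process integrity. Since $|\pi_Y(\Sigma_\alpha^{k})|>1$, the set $\pi_Y(\Sigma_\alpha^{k})$ contains two distinct observations, so there are states $\tau_1,\tau_2\in\Sigma_\alpha^{k}$ with $\pi_Y(\tau_1)\neq\pi_Y(\tau_2)$; by the definition of $\approx_{\Phi}$ this is exactly $\tau_1\not\approx_{\Phi}\tau_2$. The whole argument then reduces to tracing $\tau_1$ and $\tau_2$ back through the inductive construction of $\Sigma_\alpha^{k}$ and recovering a pair of $\High$-equivalent states whose $k$-cycle evolutions are these two observation-distinct states.

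First I would settle the base case $k=1$, which is fully transparent and serves as the template. Here $\Sigma_\alpha^{1}=\set{\TheBehaviourOf{\sigma}\mid \sigma =_\High \sigma_0}$, so the two witnesses are $\tau_1=\TheBehaviourOf{\rho_1}$ and $\tau_2=\TheBehaviourOf{\rho_2}$ with $\rho_1 =_\High \sigma_0$ and $\rho_2 =_\High \sigma_0$. Because $=_\High$ is an equivalence relation, symmetry and transitivity yield $\rho_1 =_\High \rho_2$, so the pair $(\rho_1,\rho_2)$ satisfies the antecedent of \eqref{eq:NoninterferenceCPS}; yet $\TheBehaviourOf{\rho_1}=\tau_1\not\approx_{\Phi}\tau_2=\TheBehaviourOf{\rho_2}$ falsifies its consequent, so $1$-process integrity fails.

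For general $k$ I would establish a \emph{reachability characterization} of $\Sigma_\alpha^{k}$ and then reduce to the same pattern. The key lemma is that $\Sigma_\alpha^{k}$ is precisely the set of states reachable from $\sigma_0$ after $k$ attacked cycles in the sense of \eqref{eq:newkCycle}: every admissible attack of $\alpha$ rewrites only the controllable (hence $\Low$-labelled) components of a state, and therefore maps each state to a $=_\High$-equivalent one, while conversely every $=_\High$-equivalent perturbation is realised by some attack of $\alpha$. Consequently the two witnesses $\tau_1,\tau_2$ are the $k$-cycle evolutions $\TheBehaviourOf{\rho_1}^{k}$ and $\TheBehaviourOf{\rho_2}^{k}$ of two configurations differing from $\sigma_0$ only in $\Low$-labelled components; folding the intermediate per-cycle perturbations into the attack sequence defining $\TheBehaviourOf{\cdot}^{k}$ and applying transitivity of $=_\High$ at the initial step yields a pair $(\rho_1,\rho_2)$ with $\rho_1 =_\High \rho_2$ and $\TheBehaviourOf{\rho_1}^{k}\not\approx_{\Phi}\TheBehaviourOf{\rho_2}^{k}$, again contradicting \eqref{eq:NoninterferenceCPS}.

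I expect the main obstacle to be exactly this reconciliation step. The set $\Sigma_\alpha^{k}$ is built by perturbing the state at \emph{every} cycle, whereas \eqref{eq:NoninterferenceCPS} perturbs only the \emph{initial} state. The crux is therefore to verify rigorously that $\High$-equivalence is preserved along the entire reachability chain (so that attacks never touch $\High$-labelled components, and in particular affect the observable $y$ only through the legitimate dynamics $\TheSystem$), and that the per-step attacker choices can be repackaged as a single initial $=_\High$ perturbation together with a fixed attack sequence. Once that repackaging is justified, the constructed witness pair genuinely satisfies the antecedent of the noninterference implication while violating its consequent, which establishes the claim.
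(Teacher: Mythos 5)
Your base case is sound, and for $k=1$ it is essentially the paper's argument: two elements of $\Sigma_\alpha^{1}$ with distinct observations come from states $\rho_1,\rho_2=_\High\sigma_0$, transitivity gives $\rho_1=_\High\rho_2$, and this pair falsifies \eqref{eq:NoninterferenceCPS}. The genuine gap is precisely the step you flag as the ``main obstacle'' and then defer: repackaging the per-cycle perturbations that generate $\Sigma_\alpha^{k}$ (Definition \ref{def:QuantitativeIntegrity}) as a single initial $=_\High$-perturbation followed by $k$ attack-free cycles \eqref{eq:kCycles}. This is not a technicality awaiting justification; it fails in general, because the legitimate dynamics $\TheSystem$ overwrite the controllable components at every cycle, so a perturbation injected \emph{late} can reach $y$ in a way that no \emph{initial} perturbation can reproduce at exactly cycle $k$. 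Concretely, let $\alpha$ control only $u$, let $\gamma_{AD}$ be a constant function, $\delta_{\Phi}(x,u)=x$ and $\gamma_{\Phi}(x,u)=u$, with $U$ having at least two elements. Perturbing $u$ in the last cycle puts every value of $U$ into $\pi_Y(\Sigma_\alpha^{2})$, so $|\pi_Y(\Sigma_\alpha^{2})|>1$; yet for any $\sigma=_\High\sigma'$ (they differ only in $u$), the first cycle replaces $u$ by $\gamma_{DA}(o)$ and the constant $\gamma_{AD}$ erases the only remaining trace of the perturbation, so $\TheBehaviourOf{\sigma}^{2}=\TheBehaviourOf{\sigma'}^{2}$ and $2$-process integrity \emph{holds}. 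Your folding argument, carried out rigorously (replace the perturbations one at a time, from the last cycle backwards, invoking $j$-process integrity at stage $j$), proves only the weaker statement that $j$-process integrity fails for \emph{some} $j\le k$ --- in the example above $j=1$ --- not for the given $k$ as the corollary claims.

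For comparison, the paper's own proof takes a different shortcut: from $|\pi_Y(\Sigma_\alpha^{k})|>1$ it asserts the existence of $\sigma\in\Sigma$ and $\sigma'\in\Sigma_\alpha^{k-1}$ with $\sigma=_\High\sigma'$ and $\TheBehaviourOf{\sigma}\not\approx_\Phi\TheBehaviourOf{\sigma'}$, i.e.\ a one-cycle violation at the last level, and concludes failure of $1$-process integrity (note: not $k$-process integrity). That existence claim has its own gap, complementary to yours: the observation-diversity of $\Sigma_\alpha^{k}$ may be inherited from differences in $\High$-labelled components ($o$, $u$, $q$, $x$) already present in $\Sigma_\alpha^{k-1}$ rather than created by the last perturbation --- e.g.\ when $\alpha$ controls only $i$ and the interference needs three cycles to reach $y$, no such pair exists even though $|\pi_Y(\Sigma_\alpha^{3})|>1$. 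So neither your route nor the paper's closes the distance between multi-cycle perturbation and single initial perturbation; what both arguments actually support is the repaired statement ``$|\pi_Y(\Sigma_\alpha^{k})|>1$ implies $j$-process integrity fails for some $j\le k$,'' and your proposal, by committing to falsify $k$-process integrity for the given $k$, attempts to prove something strictly stronger than what is true.
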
 
\begin{proof}
If $|\pi_Y(\Sigma_\alpha^{k})|>1$, then there exist $\sigma\in \Sigma$ and $\sigma ' \in \Sigma_\alpha^{k-1}$ with $\sigma =_{\High} \sigma'$ such that $\TheBehaviourOf{\sigma} \not \approx_\Phi \TheBehaviourOf{\sigma'}$. According to Equation \eqref{eq:NoninterferenceCPS}, this would imply that the CPS does not satisfy $1$-process integrity against $\alpha$.
%
%
\end{proof}
We are ultimately interested in using the measure $|\pi_Y(\Sigma_\alpha^{k})|$ to reduce the controllability that the attacker has over the process by changing the logic of the controller. We will discuss this in Section \ref{sec:Redesign} by means of a simple but not trivial case study. 


\subsection{Composition of CPSs}
Cyber-Physical Systems are often distributed systems that compose several control units supervising certain aspects of the process and communicating with 
each other. Thus, it might be possible for an attacker to attack a physical component in a module of a CPS by interfering with a digital component of a different module. Intuitively, this requires the existence of channels that enable the propagation of the interference caused by the attacker from module to module. These channels can be either digital or physical: when two PLCs communicate, their message passing opens a digital channel, but they might also supervise or control a physical ``channel'' (such as the same water pipe at different points). 

Consider the composition of two modules shown in Figure \ref{fig:Composition}; we assume that an attacker exists somewhere on the left module, and her objective is to affect the process variables of the right one. For that purpose, she will try to propagate her interference through the channels $i^*$ (digital) and $u^*$ (analogue). 
\begin{figure}[h]
\centering
\includegraphics[scale=0.25]{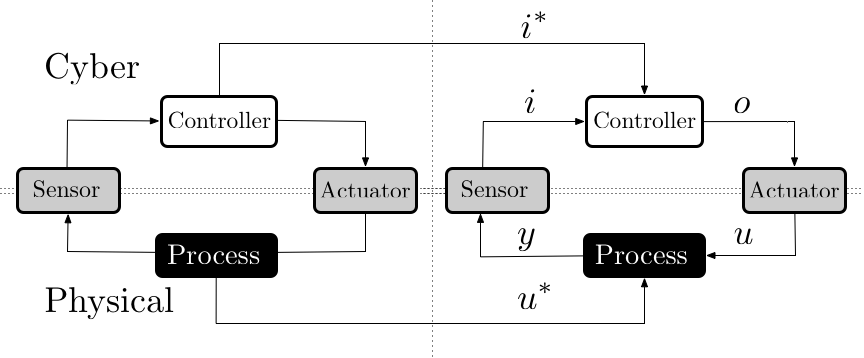}
\caption{Composition framework}
\label{fig:Composition}
\end{figure}

To analyse the module on the right of Figure \ref{fig:Composition}, we can extend its  state from $((i,o,u),(q,x),y)$ to $(((i,i^*),o,(u,u^*))(q,x),y)$; \emph{i.e.}, we extend the vectors $i$ and $u$ to now consider the new channels. Consequently, we need not compose the two modules in their entirety to perform an analysis, but only provide additional semantics for $i^*$ and $u^*$ (which could be given by state machines); this opens the door to a modular analysis of networks of CPSs, which can provide compositional results under some assumption/commitment restrictions. 

\section{Case Study: a Water Distribution CPS}
\label{sec:CaseStudy}
We now study a simple but illustrative example on how to provide useful design insights by means of our formal approach. 

\begin{figure}
\centering
\includegraphics[scale=0.3]{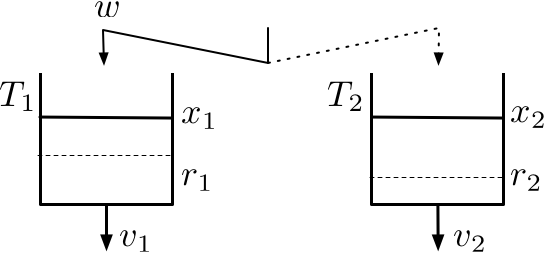}
\caption{Case study water distribution system.}
\label{fig:CaseStudy}
\end{figure}


The following case study is based on an example taken from \cite{ModelingAndControlOfHybridDynamicalSystems}. Consider the system shown in Figure \ref{fig:CaseStudy} which consists of two water tanks, $T_1$ and $T_2$, both with maximum capacity $L$. Each tank has a constant outflow of $v_1$ and $v_2$, respectively. Water is added to the system at a constant rate $w$ through a hose which, at any point in time is dedicated to either $T_1$ or $T_2$. We assume that the hose can switch between the tanks instantaneously, and we also assume that $w\geq v_1+v_2$, so it is also possible to close the hose to prevent overflowing either tank. The objective of the controller is to constantly keep the water volume $x_t$ above some time-invariant critical threshold $r_t$, for $t=1,2$. We assume that the water volumes are both initially above their respective thresholds. For this purpose, the controller is designed to switch the inflow to $T_1$ whenever $x_1< r_1$ and to $T_2$ whenever $x_2 < r_2$. 

The process has three modes: mode $q_1$ where $T_1$ is filling, mode $q_2$ where $T_2$ is filling, and mode $q_0$ where the hose is closed. The process evolves as follows when in mode $q_1$
\begin{align}
\begin{cases}
x_1(k+1)=x_1(k)+w-v_1,\\
x_2(k+1)=x_2(k)-v_2,
\end{cases}
\end{align}
 as follows when in mode $q_2$
\begin{align}
\begin{cases}
x_1(k+1)=x_1(k)-v_1,\\
x_2(k+1)=x_2(k)+w-v_2,
\end{cases}
\end{align}
and as follows when in mode $q_0$
\begin{align}
\begin{cases}
x_1(k+1)=x_1(k)-v_1,\\
x_2(k+1)=x_2(k)-v_2.
\end{cases}
\end{align}
At time $k$, the process outputs the vector of analog measurements $(y_1(k), y_2(k))$, defined by
\begin{align}
(y_1(k), y_2(k))=(x_1(k),x_2(k));
\end{align}
in this case, we see that the state is fully observable (though this might not always be the case). The sensor, at time $k$, receives the vector of measurements $(y_1(k), y_2(k))$ and outputs the vector of digital inputs $(i_1(k), i_2(k))$, defined by
\begin{align}
(i_1(k), i_2(k))=(y_1(k),y_2(k));
\end{align}
Although it may seem redundant, we explicitly define the semantics of the sensor for the sake of clarity. In other scenarios, the sensor may have more complicated semantics; \emph{e.g.}, it could encrypt or sign the data before sending it to the controller. 

The controller, which has a unique state $*$, receives the vector of inputs $(i_1, i_2)$ and outputs a command $o$, defined by
\begin{align}
\label{eq:controller}
\mathtt{if}\  i_1 < r_1\ \mathtt{then}\ o:= q_1,\\
\mathtt{else if}\  i_2 < r_2\ \mathtt{then}\ o:= q_2,\\
\mathtt{else }\  o:= q_0.
\end{align}
Then, at time $k$, the actuator translates the command $o(k)$ into the analog control signal $u(k)$ defined by 
\begin{align}
u(k)=o(k).
\end{align}
The physical process works as a state machine that reacts to the control signal $u(k)$ and \emph{switches} on it to decide how to update the state; \emph{i.e.}, if $u(k)=q_j$, then update the state using the equations of mode $q_j$, for $j=0,1,2$. This can be clearly represented as a {hybrid automaton} (see \cite[Figure 1.17]{ModelingAndControlOfHybridDynamicalSystems}). We now define some attackers and analyse their impact over this CPS.


\subsection{Attackers}
We want to determine whether the an attacker can violate the integrity of the process (according to Definition \ref{def:NICPS}), and if so, quantify the potential impact. This impact will be a natural number, that will allow us to compare various attackers and assess different controllers in terms of security. In order to complement this quantitative analysis, we will also consider consider four classes of critical states: for $t=1,2$, when tank $T_t$ is empty (classes $E_1$ and $E_2$), and when tank $i$ is full (classes $F_1$ and $F_2$). 

According to Definition \ref{def:States}, states of this CPS have the structure $((i_1,i_2,o,u),(*,(x_1,x_2)),(y_1,y_2))$. For $t=1,2$, we say that the  state $\sigma$ is a critical state of class $E_t$ at time $k$ (i.e., a state where tank $i$ is empty) iff $y_t(k)=0$; similarly, we say that $\sigma$ is a critical state of class $F_t$ at time $k$ iff $y_t(k)=L$. The system does not reach any critical state under normal circumstances since the controller is designed to stabilise the values of $y_1$ and $y_2$ around the thresholds $r_1$ and $r_2$, respectively. Thus, for $t=1,2$ the normal range of values for $y_t$ is $[r_t^-, r_t^+]$, with $r_t^-\leq r_t \leq r_t^+$, and we will assume for the initial value $y_t(0)$ that $r_t \leq y_t(0)\leq r_t^+$. Note that in general our assumption is that the controller under analysis has been already 
verified to comply with their functional and safety requirements under normal circumstances (no attackers present).

Let us consider three different attackers:\\
 \textbf{Attacker $\mathbf{\alpha_1}$}, who controls the digital control signal $o(k)$, for all $k$.\\
\textbf{Attacker  $\mathbf{\alpha_2}$}, who controls the digital measurement $i_2(k)$, for all $k$.\\
\textbf{Attacker  $\mathbf{\alpha_3}$}, who controls the digital  measurement $i_1(k)$, for all $k$.

Note that since we want to protect the physical state of the system (level 
of the tanks), we do not consider attackers that can directly tamper with this state, since they will trivially achieve their goal. 

\subsection{Quantification and analysis}

Using the formal semantics and security properties described above, it is now possible to carry out a mathematical analysis on the impact that the three attackers considered can have on the system. Table \ref{tab:Attackers} summarises the results of this analysis. Attacker $\alpha_1$ is very powerful since she can drive the system to any class of critical state she wants. Attackers $\alpha_2$ and $\alpha_3$ are very similar to each other in terms of capabilities, but $\alpha_3$ is more powerful than $\alpha_2$, since $\alpha_3$ can take the system to three out of the four classes of critical states, while $\alpha_2$ can only take the system to two classes of critical states.  In the following we sketch manual proofs of these results, and we start to build arguments for the discussion we carry out in Section \ref{sec:Discussion}.

{\small
\begin{table}[!h]
\centering
\begin{tabular}{|c|c|c|c|c|c|c|c|}
\hline
 \multirow{2}{*}{Attacker} &  \multirow{2}{*}{Controls} &  \multicolumn{2}{c|}{Quantification} & \multicolumn{4}{c|}{Vulnerable?}\\\cline{3-8}
 &	&	$y_1$& $y_2$& $E_1$&	$E_2$	&	$F_1$&   $F_2$\\
\hline\hline
$\alpha_1$	&	${o}$ &$[0,L]$&$[0,L]$& $\cmark$&$\cmark$&$\cmark$&$\cmark$ \\
\hline
$\alpha_2$	&	${i_2}$ &$[r_1^-,r_1^+]$ &$[0,L]$&$\xmark$&$\cmark$&$\xmark$&$\cmark$ \\
\hline
$\alpha_3$	&	${i_1}$ &$[0,L]$&$[0,r_2^+]$ & $\cmark$&$\cmark$&$\cmark$&$\xmark$ \\
\hline
\end{tabular}
\vspace{1em}
\caption{Quantification of $k$-controllability of the attackers for a $k$ that is large enough. We quantify by using ranges to over-approximate the sets $\Sigma^k_\alpha$ for the different attackers.} 
\label{tab:Attackers}
\vspace{-10pt}
\end{table}
}

\noindent\emph{Proof outline:} We assume that the initial state is
\begin{align}\sigma(0)=((r_1^+,r_2^+),q_0,q_0),(*,(r_1^+,r_2^+)),(r_1^+,r_2^+)),
\end{align} \emph{i.e.}, a state where the water level of the tank $T_t$ is $r^+_t$, for $t=1,2$, and the hose is closed.\\
\textbf{Attacker $\alpha_1$} has control over the digital output $o$, and she can choose to replace $o(k)$ with an element of $\set{q_1, q_2, q_0}$ for all $k$.  In one cycle, we have 
\begin{align*}
\Sigma^1_{\alpha_1}=&\left\{\TheBehaviourOf{((r_1^+,r_2^+),\underline{q_0},q_0),(*,(r_1^+,r_2^+)),(r_1^+,r_2^+))}, \right.\\
&\left. \TheBehaviourOf{((r_1^+,r_2^+),\underline{q_1},q_0),(*,(r_1^+,r_2^+)),(r_1^+,r_2^+))}, \right.\\
&\left. \TheBehaviourOf{((r_1^+,r_2^+),\underline{q_2},q_0),(*,(r_1^+,r_2^+)),(r_1^+,r_2^+))}\right\}\\
=&\left\{{((r_1^+,r_2^+),q_0,\underline{q_0}),(*,(r_1^+-v_1,r_2^+-v_2)),(r_1^+-v_1,r_2^+-v_1))}, \right.\\
&\left.{((r_1^+,r_2^+),q_0,\underline{q_1}),(*,(r_1^+-v_1,r_2^+-v_2)),(r_1^+-v_1,r_2^+-v_1))}, \right.\\
&\left.{((r_1^+,r_2^+),q_0,\underline{q_2}),(*,(r_1^+-v_1,r_2^+-v_2)),(r_1^+-v_1,r_2^+-v_1))} \right\}.
\end{align*}
(We underline the differences.)
Since $\pi_{y_1}(\Sigma^1_{\alpha_1})=\set{r_1^+-v_1}$ and $\pi_{y_2}(\Sigma^1_{\alpha_1})=\set{r_2^+-v_2}$, we see that the attack issued by $\alpha_1$ has not had enough time to propagate to the process. However, after considering one more cycle, we see that 
\begin{align}
\pi_{y_1}(\Sigma^2_{\alpha_1})=\set{r_1^+-2v_1, r_1^+-2v_1+w},\\
\pi_{y_2}(\Sigma^2_{\alpha_1})=\set{r_2^+-2v_2, r_2^+-2v_2+w}.
\end{align}
We see two different values for $y_1$ because the control signals $q_0$ and $q_2$ prevent the tank $T_1$ from filling (thus making the level of water $r_1^+-2v_1$) while the control signal $q_1$ causes the hose to fill $T_1$ (making the level of water $r_1^+-v_1+w$). As we extend our analysis over several cycles (a large enough $k$), we see that we obtain linear combinations of $v_t$ and $w$, for $t=1,2$, as follows
\begin{align}
\pi_{y_1}(\Sigma^k_{\alpha_1})=\set{y | y=r_1^+-\beta v_1+\epsilon w, 0\leq y \leq L, \beta, \epsilon \in \Nat},\\
\pi_{y_2}(\Sigma^k_{\alpha_1})=\set{y | y=r_2^+-\beta v_2+\epsilon w, 0\leq y \leq L, \beta, \epsilon \in \Nat}.
\end{align}
For $t=1,2$, the set $\pi_{y_t}(\Sigma^k_{\alpha_1})$ includes $0$ and $L$, because there will be linear combinations $r_t^+-\beta v_t+\epsilon w$ that would be less than $0$ and greater than $L$; however, the level of water in the tank $T_t$ cannot go below 0 and it cannot go above $L$. For simplicity of exposition, we over-estimate these sets using the interval $[0,L]$ (see Table \ref{tab:Attackers}).

\textbf{Attacker $\alpha_2$} has control over the digital input $i_2$, and she could choose to replace $i_2(k)$ with an element from the range $[0,L]$ for all $k$. However, we see that many choices for $i_2$ trigger similar behaviours inside the controller. Since there is only one branching depending on whether $i_2 < r_2$, we take $0$ and $L$ as the two representative values for $i_2$ that $\alpha_2$ could use to trigger different behaviours. Thus in one cycle, we have 
\begin{align*}
\Sigma^1_{\alpha_2}=&\left\{\TheBehaviourOf{((r_1^+,\underline{r_2^+}),q_0,q_0),(*,(r_1^+,r_2^+)),(r_1^+,r_2^+))}, \right.\\
&\left. \TheBehaviourOf{((r_1^+,\underline{0}),q_0,q_0),(*,(r_1^+,r_2^+)),(r_1^+,r_2^+))}, \right.\\
&\left. \TheBehaviourOf{((r_1^+,\underline{L}),q_0,q_0),(*,(r_1^+,r_2^+)),(r_1^+,r_2^+))}\right\}\\
=&\left\{{((r_1^+,r_2^+),\underline{q_0},q_0),(*,(r_1^+-v_1,r_2^+-v_2)),(r_1^+-v_1,r_2^+-v_1))}, \right.\\
&\left.{((r_1^+,r_2^+),\underline{q_2},q_0),(*,(r_1^+-v_1,r_2^+-v_2)),(r_1^+-v_1,r_2^+-v_1))}, \right.\\
&\left.{((r_1^+,r_2^+),\underline{q_0},q_0),(*,(r_1^+-v_1,r_2^+-v_2)),(r_1^+-v_1,r_2^+-v_1))} \right\}.
\end{align*}
At $k=3$, there is finally an impact on the value of $y_2$, and we have
\begin{align}
\pi_{y_1}(\Sigma^3_{\alpha_2})&=\set{r_1^+-3v_1},\\
\pi_{y_2}(\Sigma^3_{\alpha_2})&=\set{r_2^+-3v_2, r_2^+-3v_2+w}.
\end{align}
In this case, the attacker $\alpha_2$ has the same degree of control over $y_2$ that $\alpha_1$ had: $\alpha_2$ can cause the water level of $T_2$ to rise or decrease at will (although with slightly more delay). However, there is no flow of information from $i_2$ to $y_1$, so we the values for $y_1$ correspond to the values that we we see during the normal execution of the system over $k$ cycles, \emph{i.e.}, values between $r_1^-$ and $r_1^+$.
{
\\
\textbf{Attacker $\alpha_3$} has control over the digital input $i_1$, and she could choose to replace $i_1(k)$ with an element from the range $[0,L]$ for all $k$. The influence of $\alpha_3$ over $y_1$ is clear, since it is dual to the influence that the attacker $\alpha_2$ had over $y_2$, but we also see an indirect flow from $i_1$ to $y_2$ due to the way the controller is programmed. More precisely, whenever $i_1< r_1$, the controller never outputs $q_2$, so it is possible for the attacker to continuously inhibit the mode $q_2$ by, for example, making $i_1=0$. Consequently, the value $r_2^+-kv_2$ would be an element of $\pi_{y_2}(\Sigma^k_{\alpha_3})$. Since the value $r_2^+-kv_2$ cannot be less than 0, whenever $r_2^+\leq kv_2$, we conclude that $0$ is an element of $\pi_{y_2}(\Sigma^k_{\alpha_3})$ and that $\alpha_3$ can empty $T_2$. However, no value of $i_1$ forces the controller to output $q_2$, so $\alpha_3$ cannot arbitrarily increase the value of $y_2$ beyond its operational upper bound $r_2^+$.
}
\subsection{Redesigning the Controller}
\label{sec:Redesign}
In this section, we show how we can reduce the controllability of the attacker $\alpha_3$ by adding a fairness mechanism to the controller. Instead of a single mode $*$, we now say that the controller has two. Let $m$ be a variable that denotes the current mode of the controller, whose value is either $m_1$ or $m_2$. The new controller receives the vector of inputs $(i_1, i_2)$ and outputs a command $o$ depending on its current mode. If the mode is $m_1$ then
\begin{align}
\label{eq:controllernew1}
\mathtt{if}\  i_1 < r_1 \land  i_2< r_2 \ \mathtt{then}\ o:=q_1, m:=m_2,\\
\mathtt{else if}\  i_1 < r_1 \mathtt{then}\ o:=q_1,\\
\mathtt{else if}\  i_2 < r_2\ \mathtt{then}\ o:=q_2,\\
\mathtt{else }\  o:=q_0,
\end{align}
and if the mode is $m_2$ then
\begin{align}
\label{eq:controllernew2}
\mathtt{if}\  i_1 < r_1 \land  i_2< r_2 \ \mathtt{then}\ o:=q_2, m:=m_1,\\
\mathtt{else if}\  i_1 < r_1 \mathtt{then}\ o:=q_1,\\
\mathtt{else if}\  i_2 < r_2\ \mathtt{then}\ o:=q_2,\\
\mathtt{else }\  o:=q_0.
\end{align}
If the new controller sees that both tanks are below the thresholds $r_1$ and $r_2$, it will not always prioritise the filling of $T_1$ as it did before. We informally justify that the controllability of $\alpha_3$ over the variable $y_2$ changed, as it is now not possible for the attacker to empty $T_2$. The tank $T_2$ does not run out of water, because the controller will fill $T_2$ when it sees that the water level of $T_2$ is below $r_2$ even if the attacker always tricks the controller into thinking that $T_1$ is empty, a behaviour that was not present before. However, if the water level of tank $T_2$ is at its minimum normal value, $r_2^-$, it could happen that the controller's mode is $m_1$ at that time, and it will take an extra cycle to switch from mode $m_1$ to $m_2$. During that extra cycle, the tank $T_2$ loses $v_2$ units of water, so the attacker's control over $y_2$ is lower bounded by $r_2^--v_2$, as shown in Table \ref{tab:AttackersNew}. Consequently, the attacker $\alpha_3$ cannot make the system reach a state where the tank $T_2$ is empty.
{\small
\begin{table}[!h]
\centering
\begin{tabular}{|c|c|c|c|c|c|c|c|}
\hline
 \multirow{2}{*}{Attacker} &  \multirow{2}{*}{Controls} &  \multicolumn{2}{c|}{Quantification} & \multicolumn{4}{c|}{Vulnerable?}\\\cline{3-8}
 &	&	$y_1$& $y_2$& $E_1$&	$E_2$	&	$F_1$&   $F_2$\\
\hline\hline
$\alpha_3$	&	${i_1}$ &$[0,L]$&$[r_2^{-}-v_2,r_2^+]$ & $\cmark$&$\xmark$&$\cmark$&$\xmark$ \\
\hline
\end{tabular}
\caption{Quantification of the $k$-controllability of attacker $\alpha_3$ with the new controller in place.} 
\label{tab:AttackersNew}
\vspace{-10pt}
\end{table}
}
\section{Discussion and Future Work}
\label{sec:Discussion}
In this section, we list and discuss some fundamental discussion points, some of which represent the basis for interesting future work. 

\subsection{Modelling and Abstraction level}
Although we provide a formal framework for the modelling of CPSs, their states, and their execution semantics, and showed evidence that it can be used to derive useful analysis, it will be a subject of future work to assess how appropriate our modelling is when applied to other, perhaps more complex, scenarios. In essence, our formalism defines a deterministic transition system whose transition function is given by the one-step cycle semantics function $\TheSystem$ (see Definition \ref{def:SingleCycleSemantics}). Consequently, the notion of integrity that we use in Definition \ref{def:NICPS} is not probabilistic. However, we believe that this approach is good enough to carry out an approximate analysis if the probabilistic nature of the process is due to the existence of zero-mean noises.

\emph{Hybrid automata} \cite{ALUR19953} are state-based models for dynamical systems. These automata allow the description of invariants and side-effects on transitions, making them suitable for the description of many continuous physical processes. There are also model checking techniques for hybrid automata. It would be interesting to see if these automata provide any advantages when it comes to automatic verification of the proposed properties.


\subsection{Quantification of Interference}
For our case study, we came up with a notion of quantification of integrity violations which, ultimately, was related to how many unwanted states could an attacker force the system to be in, in other words the level of \emph{corruption} an attacker can induce. These metrics are in a sense dual to metrics used in quantitative information flow, which measure the entropy of information, and how likely it is for an attacker to guess a secret based on the information she has already observed. 

Ideally, we want to quantify the $k$-controllability of attackers for large values of $k$ in order to cover as many possible states. Unfortunately, the larger $k$ is, the more computationally demanding the analysis becomes. However, we want to highlight that the behaviour of CPSs is usually regular; \emph{i.e.}, the operation modes of CPSs often follow the same sequence over and over, revisiting similar states each time the process repeats. Due to this regular nature, it may be possible to restrict the analysis to the length of this ``behavioural loop'' instead of analysing arbitrarily long traces, therefore considering a well defined and relatively small $\bar{k} \in \mathbb{N}$.

\subsection{A Theory of Attacks and Attackers}
From our case study, we observe that it is possible for different attackers to drive the CPS to critical states by using different attack strategies. Thus, it may be convenient to define more abstract notions of attack and attacker that are more related to the effects that we want to avoid on the system in order to bundle together attacks and attackers that are equally powerful.
While we think that our attacker model is at least as powerful as attacker models commonly used in control theory (see Section \ref{sec:AttacksOnModels}), further work is indeed to formally compare them. 

\subsection{Properties}
There are many well-known properties in control theory that characterise important aspects of dynamical systems, including {controllability, observability, stability} and {stabilizability}. 
Since we focus on the protection of the integrity of the process, we want to reduce the degree of \emph{controllability} that the attacker has over the system. A focus on the protection of the confidentiality of data would aim to reduce the degree of \emph{observability} that the attacker has over the system. Krotofil and Larsen acknowledge the importance of reasoning about {controllability} and {observability}, but they highlight that it is also important to reason about \emph{operability}, which is ``the ability to achieve acceptable operations'' \cite{krotofil2015rocking}. Determining how our notion of process integrity relates to these other properties is an interesting research topic.


\subsection{Proof Automation}
From the manual analysis of the case study CPS in the previous seciton, we foresee that it will be possible to use existing techniques (such as automated theorem proving, model-checking and SMT solving) to automate parts of the verification. The definition of the measures for quantification, the notions of critical state, the attackers and the initial states must naturally be done manually, but the problem of verification can be reduced to the quantification of the reachable states by an attacker (similar as in quantitative information flow analysis for confidentiality). 
It would be also interesting to determine whether our approach together with state-of-the-art verification tools has advantages over existing control-theoretical reasoning techniques, which focus on solving algebraic 
representations of CPSs.

\subsection{Redesign}
In our case study, we determined that the attacker $\alpha_3$ was more powerful than the attacker $\alpha_2$ due to the way the controller was designed. Thus, we believe that it would be interesting to develop a methodology for (re)designing controllers that helps avoiding these type of unsafe behaviours. 

\section{Related Work}

There is a wealth of work related to the modelling and verification of Cyber-Physical Systems, mostly with focus on traditional \emph{safety} properties: both in the formal sense (properties of a single trace) and the 
informal sense (resilience against random faults), see for instance \cite{alur2015principles} for a good survey on this field. In the following thus we focus mostly on formal models of CPSs security.

The applicability of Information Flow Analysis (IFA) to CPS security has been reinforced by several authors, although many of the works do not provide definitive results, and other authors focus on protecting confidentiality and not integrity.
{
}

In \cite{CPSSec}, the Gollmann and Krotofil state that ``Physical relationships between the variables in an industrial process, e.g. between volume, pressure, and temperature in a vessel, can be viewed as information flows from a modelling perspective,'' acknowledging that it is possible to model physical aspects of CPS using an IF setting; however, they do not explicitly say how. Gamage \emph{et al.} \cite{Gamage2010} use IFA to illustrate how to prevent attacks on confidentiality of CPS though the notion of \emph{compensating pair} $(a, a^c)$, where $a^c$ is an action that cancels the physical manifestation of the earlier occurring action $a$ so that attackers do not infer that $a$ took place.

{
Other authors have proposed attacker models for CPSs. For example, Howser and McMillin provide in \cite{StuxnetOnCPS} an IF-based attacker model that builds on top of nondeducibility \cite{Nondeducibility}, where attackers aim to hide information relevant to attacks or faults to the monitoring systems, preventing operators from realising that the behaviour of the system is anomalous. Given that we our ultimate goal is to make CPSs more resilient by design, we consider all possible behaviours that attackers could trigger, so our analysis is not aimed at deciding whether the attacker hides information from the operator. Rocchetto and Tippenhauer \cite{CPSDolevYao} extend the Dolev-Yao attacker model \cite{DolevYao} to the Cyber-Physical Dolev-Yao (CPDY) model, where attackers can interact with the physical domain through orthogonal channels. We believe that their attackers can be modelled in our framework with  attackers that control the components of the control vector $u$, but a formal justification is still missing.
}
From the perspective of control theory, the work by Weerakkody \emph{et al.} \cite{IFCPSSec} proposes the \emph{Kullback-Leibler divergence} between the distributions of the attacked and attack-free residuals as {a measure for information flow} to determine to which extent the actions of an attacker interfere with the system. However, their definition of security is ultimately tied to a maximum deviation from a prediction model, while ours is based on semantics-based information flow. Thus, although we can imagine that both measures are somehow related, it is not evident what their exact relationship is. In the work by Murguia \emph{et al.} \cite{ReachableSets}, the authors characterise the states of the system that can be reached when under attack, and they try to minimise this set of reachable states by modifying the mathematical model of the controller. Their approach to characterise the security of the system is purely control-theoretical, while ours is based on information-flow analysis.

Finally, verification tools like McLaughlin \emph{et al.}'s \cite{TSVPLC} that use symbolic execution for model checking safety properties in PLCs could help our approach by helping to quantify the interference of attackers.
{
In sum, to the best of our knowledge, we are the first to propose the use of a semantics-based approach, inspired by traditional information-flow analysis techniques used for software security, to quantify the impact of attacker actions on process variables at design time in CPSs.
}


\section{Conclusions}
\label{sec:Conclusion}
In this work we 
proposed a formal technique to reason about the security of Cyber-Physical Systems based on information-flow analysis and focused on integrity properties. 
Our basic observation is that in CPS security often the worst case scenario is 
related with the integrity of some physical state (i.e. level of water or 
chemical concentration in tank, temperature etc.) and that an attacker's goal 
is to reach that state by manipulating certain digital or physical inputs to 
a system (by tampering with one or more sensor or actuators). As such, we can 
cast this problem as an information flow problem and leverage on well-known 
principles to perform this analysis. We have illustrated our approach by 
means of a realistic case study and showed that we can identify and quantify 
non-trivial harmful flows.

In the future we plan to perform this reasoning in a semi-automatic fashion by 
leveraging on model-checking technology, and we plan to apply our methodology 
to a wider range of CPS from various domains (such as electricity, water 
treatment).

\begin{acks}
We would like to thank Bruce McMillin for his insightful comments and suggestions to earlier versions of this draft.
\end{acks}
\bibliographystyle{ACM-Reference-Format}
\bibliography{SIMEi}


\begin{thebibliography}{00}


\ifx \showCODEN    \undefined \def \showCODEN     #1{\unskip}     \fi
\ifx \showDOI      \undefined \def \showDOI       #1{#1}\fi
\ifx \showISBNx    \undefined \def \showISBNx     #1{\unskip}     \fi
\ifx \showISBNxiii \undefined \def \showISBNxiii  #1{\unskip}     \fi
\ifx \showISSN     \undefined \def \showISSN      #1{\unskip}     \fi
\ifx \showLCCN     \undefined \def \showLCCN      #1{\unskip}     \fi
\ifx \shownote     \undefined \def \shownote      #1{#1}          \fi
\ifx \showarticletitle \undefined \def \showarticletitle #1{#1}   \fi
\ifx \showURL      \undefined \def \showURL       {\relax}        \fi
\providecommand\bibfield[2]{#2}
\providecommand\bibinfo[2]{#2}
\providecommand\natexlab[1]{#1}
\providecommand\showeprint[2][]{arXiv:#2}

\bibitem[\protect\citeauthoryear{??}{Stu}{2011}]%
        {StuxnetWeb}
 \bibinfo{year}{2011}\natexlab{}.
\newblock \bibinfo{title}{{The Man Who Found Stuxnet -- Sergey Ulasen in the
  Spotlight}}.
\newblock
  \bibinfo{howpublished}{\url{https://eugene.kaspersky.com/2011/11/02/the-man-who-found-stuxnet-sergey-ulasen-in-the-spotlight/}}.
    (\bibinfo{year}{2011}).
\newblock
\newblock
\shownote{{Accessed: 2017-01-09}.}


\bibitem[\protect\citeauthoryear{??}{Lag}{2014}]%
        {Lagebericht2014}
 \bibinfo{year}{2014}\natexlab{}.
\newblock \bibinfo{title}{{Bericht zur Lage der IT-Sicherheit in Deutschland
  2014}}.
\newblock
  \bibinfo{howpublished}{\url{https://www.bsi.bund.de/SharedDocs/Downloads/DE/BSI/Publikationen/\\Lageberichte/Lagebericht2014.pdf}}.
    (\bibinfo{year}{2014}).
\newblock
\newblock
\shownote{{In German, Accessed: 2017-01-09}.}


\bibitem[\protect\citeauthoryear{??}{Wir}{2015}]%
        {WiredArticle}
 \bibinfo{year}{2015}\natexlab{}.
\newblock \bibinfo{title}{{A Cyberattack Has Caused Confirmed Physical Damage
  for the Second Time Ever}}.
\newblock
  \bibinfo{howpublished}{\url{https://www.wired.com/2015/01/german-steel-mill-hack-destruction/}}.
    (\bibinfo{year}{2015}).
\newblock
\newblock
\shownote{{Accessed: 2017-01-09}.}


\bibitem[\protect\citeauthoryear{Adams, Woodall, and Lowry}{Adams
  et~al\mbox{.}}{1992}]%
        {Adams}
\bibfield{author}{\bibinfo{person}{B.M. Adams}, \bibinfo{person}{W.H. Woodall},
  {and} \bibinfo{person}{C.A. Lowry}.} \bibinfo{year}{1992}\natexlab{}.
\newblock \showarticletitle{The use (and misuse) of false alarm probabilities
  in control chart design}.
\newblock \bibinfo{journal}{{\em Frontiers in Statistical Quality Control 4\/}}
  (\bibinfo{year}{1992}), \bibinfo{pages}{155--168}.
\newblock


\bibitem[\protect\citeauthoryear{Adepu and Mathur}{Adepu and Mathur}{2016}]%
        {CPSInvariantsForDetection}
\bibfield{author}{\bibinfo{person}{Sridhar Adepu} {and} \bibinfo{person}{Aditya
  Mathur}.} \bibinfo{year}{2016}\natexlab{}.
\newblock \bibinfo{booktitle}{{\em Using Process Invariants to Detect Cyber
  Attacks on a Water Treatment System}}.
\newblock \bibinfo{publisher}{Springer International Publishing},
  \bibinfo{address}{Cham}, \bibinfo{pages}{91--104}.
\newblock
\showISBNx{978-3-319-33630-5}
\showDOI{%
\url{https://doi.org/10.1007/978-3-319-33630-5\_7}}


\bibitem[\protect\citeauthoryear{Alur}{Alur}{2015}]%
        {alur2015principles}
\bibfield{author}{\bibinfo{person}{Rajeev Alur}.}
  \bibinfo{year}{2015}\natexlab{}.
\newblock \bibinfo{booktitle}{{\em Principles of cyber-physical systems}}.
\newblock \bibinfo{publisher}{MIT Press}.
\newblock


\bibitem[\protect\citeauthoryear{Alur, Courcoubetis, Halbwachs, Henzinger, Ho,
  Nicollin, Olivero, Sifakis, and Yovine}{Alur et~al\mbox{.}}{1995}]%
        {ALUR19953}
\bibfield{author}{\bibinfo{person}{R. Alur}, \bibinfo{person}{C. Courcoubetis},
  \bibinfo{person}{N. Halbwachs}, \bibinfo{person}{T.A. Henzinger},
  \bibinfo{person}{P.-H. Ho}, \bibinfo{person}{X. Nicollin},
  \bibinfo{person}{A. Olivero}, \bibinfo{person}{J. Sifakis}, {and}
  \bibinfo{person}{S. Yovine}.} \bibinfo{year}{1995}\natexlab{}.
\newblock \showarticletitle{The algorithmic analysis of hybrid systems}.
\newblock \bibinfo{journal}{{\em Theoretical Computer Science\/}}
  \bibinfo{volume}{138}, \bibinfo{number}{1} (\bibinfo{year}{1995}),
  \bibinfo{pages}{3 -- 34}.
\newblock
\showISSN{0304-3975}
\showDOI{%
\url{https://doi.org/10.1016/0304-3975(94)00202-T}}
\newblock
\shownote{Hybrid Systems.}


\bibitem[\protect\citeauthoryear{Astr\"{o}m and Wittenmark}{Astr\"{o}m and
  Wittenmark}{1997}]%
        {Astrom}
\bibfield{author}{\bibinfo{person}{Karl~J. Astr\"{o}m} {and}
  \bibinfo{person}{Bj\"{o}rn Wittenmark}.} \bibinfo{year}{1997}\natexlab{}.
\newblock \bibinfo{booktitle}{{\em Computer-controlled Systems (3rd Ed.)}}.
\newblock \bibinfo{publisher}{Prentice-Hall, Inc.}, \bibinfo{address}{Upper
  Saddle River, NJ, USA}.
\newblock


\bibitem[\protect\citeauthoryear{Bai, Pasqualetti, and Gupta}{Bai
  et~al\mbox{.}}{2015}]%
        {Gupta2}
\bibfield{author}{\bibinfo{person}{C.~Z. Bai}, \bibinfo{person}{F.
  Pasqualetti}, {and} \bibinfo{person}{V. Gupta}.}
  \bibinfo{year}{2015}\natexlab{}.
\newblock \showarticletitle{Security in stochastic control systems: Fundamental
  limitations and performance bounds}. In \bibinfo{booktitle}{{\em American
  Control Conference (ACC), 2015}}. \bibinfo{pages}{195--200}.
\newblock


\bibitem[\protect\citeauthoryear{Basseville}{Basseville}{1988}]%
        {Basseville}
\bibfield{author}{\bibinfo{person}{M. Basseville}.}
  \bibinfo{year}{1988}\natexlab{}.
\newblock \showarticletitle{Detecting changes in signals and systems - a
  survey}.
\newblock \bibinfo{journal}{{\em Automatica\/}}  \bibinfo{volume}{24}
  (\bibinfo{year}{1988}), \bibinfo{pages}{309 -- 326}.
\newblock


\bibitem[\protect\citeauthoryear{Bell and Padula}{Bell and Padula}{1973}]%
        {BellLapadula}
\bibfield{author}{\bibinfo{person}{D.E. Bell} {and} \bibinfo{person}{L.J.L.
  Padula}.} \bibinfo{year}{1973}\natexlab{}.
\newblock \bibinfo{booktitle}{{\em Secure Computer Systems: Mathematical
  Foundations and Model}}.
\newblock Number v. 1. \bibinfo{publisher}{Mitre Corp.}
\newblock


\bibitem[\protect\citeauthoryear{Biba}{Biba}{1977}]%
        {BibaIntegrity}
\bibfield{author}{\bibinfo{person}{Kenneth~J Biba}.}
  \bibinfo{year}{1977}\natexlab{}.
\newblock \bibinfo{booktitle}{{\em Integrity considerations for secure computer
  systems}}.
\newblock \bibinfo{type}{{T}echnical {R}eport} MTR-3153.
  \bibinfo{institution}{MITRE Corporation}.
\newblock


\bibitem[\protect\citeauthoryear{C\'{a}rdenas, Amin, Lin, Huang, Huang, and
  Sastry}{C\'{a}rdenas et~al\mbox{.}}{2011}]%
        {CPSAttacksAgainstPCS}
\bibfield{author}{\bibinfo{person}{Alvaro~A. C\'{a}rdenas},
  \bibinfo{person}{Saurabh Amin}, \bibinfo{person}{Zong-Syun Lin},
  \bibinfo{person}{Yu-Lun Huang}, \bibinfo{person}{Chi-Yen Huang}, {and}
  \bibinfo{person}{Shankar Sastry}.} \bibinfo{year}{2011}\natexlab{}.
\newblock \showarticletitle{Attacks Against Process Control Systems: Risk
  Assessment, Detection, and Response}. In \bibinfo{booktitle}{{\em Proceedings
  of the 6th ACM Symposium on Information, Computer and Communications
  Security}} {\em (\bibinfo{series}{ASIACCS '11})}. \bibinfo{publisher}{ACM},
  \bibinfo{address}{New York, NY, USA}, \bibinfo{pages}{355--366}.
\newblock
\showISBNx{978-1-4503-0564-8}
\showDOI{%
\url{https://doi.org/10.1145/1966913.1966959}}


\bibitem[\protect\citeauthoryear{Chen and Patton}{Chen and Patton}{1999}]%
        {Patton_1}
\bibfield{author}{\bibinfo{person}{Jie Chen} {and} \bibinfo{person}{Ron~J.
  Patton}.} \bibinfo{year}{1999}\natexlab{}.
\newblock \bibinfo{booktitle}{{\em Robust Model-based Fault Diagnosis for
  Dynamic Systems}}.
\newblock \bibinfo{publisher}{Kluwer Academic Publishers},
  \bibinfo{address}{Norwell, MA, USA}.
\newblock


\bibitem[\protect\citeauthoryear{Clark, Hunt, and Malacaria}{Clark
  et~al\mbox{.}}{2002}]%
        {QIF}
\bibfield{author}{\bibinfo{person}{David Clark}, \bibinfo{person}{Sebastian
  Hunt}, {and} \bibinfo{person}{Pasquale Malacaria}.}
  \bibinfo{year}{2002}\natexlab{}.
\newblock \showarticletitle{{Quantitative Analysis of the Leakage of
  Confidential Data}}.
\newblock \bibinfo{journal}{{\em Electronic Notes in Theoretical Computer
  Science\/}} \bibinfo{volume}{59}, \bibinfo{number}{3} (\bibinfo{year}{2002}).
\newblock


\bibitem[\protect\citeauthoryear{Clark and Wilson}{Clark and Wilson}{1987}]%
        {ClarkWilson87}
\bibfield{author}{\bibinfo{person}{D.~D. Clark} {and} \bibinfo{person}{D.~R.
  Wilson}.} \bibinfo{year}{1987}\natexlab{}.
\newblock \showarticletitle{A Comparison of Commercial and Military Computer
  Security Policies}. In \bibinfo{booktitle}{{\em 1987 IEEE Symposium on
  Security and Privacy}}. \bibinfo{pages}{184--194}.
\newblock
\showISSN{1540-7993}
\showDOI{%
\url{https://doi.org/10.1109/SP.1987.10001}}


\bibitem[\protect\citeauthoryear{Clarkson and Schneider}{Clarkson and
  Schneider}{2010a}]%
        {Hyperproperties}
\bibfield{author}{\bibinfo{person}{Michael~R. Clarkson} {and}
  \bibinfo{person}{Fred~B. Schneider}.} \bibinfo{year}{2010}\natexlab{a}.
\newblock \showarticletitle{Hyperproperties}.
\newblock \bibinfo{journal}{{\em Journal of Computer Security\/}}
  \bibinfo{volume}{18}, \bibinfo{number}{6} (\bibinfo{year}{2010}),
  \bibinfo{pages}{1157--1210}.
\newblock
\showDOI{%
\url{https://doi.org/10.3233/JCS-2009-0393}}


\bibitem[\protect\citeauthoryear{Clarkson and Schneider}{Clarkson and
  Schneider}{2010b}]%
        {QuantitativeIntegrity}
\bibfield{author}{\bibinfo{person}{M.~R. Clarkson} {and} \bibinfo{person}{F.~B.
  Schneider}.} \bibinfo{year}{2010}\natexlab{b}.
\newblock \showarticletitle{Quantification of Integrity}. In
  \bibinfo{booktitle}{{\em 2010 23rd IEEE Computer Security Foundations
  Symposium}}. \bibinfo{pages}{28--43}.
\newblock
\showISSN{1063-6900}
\showDOI{%
\url{https://doi.org/10.1109/CSF.2010.10}}


\bibitem[\protect\citeauthoryear{Dan and Sandberg}{Dan and Sandberg}{2010}]%
        {CPSStealthAttacks}
\bibfield{author}{\bibinfo{person}{G. Dan} {and} \bibinfo{person}{H.
  Sandberg}.} \bibinfo{year}{2010}\natexlab{}.
\newblock \showarticletitle{Stealth Attacks and Protection Schemes for State
  Estimators in Power Systems}. In \bibinfo{booktitle}{{\em 2010 First IEEE
  International Conference on Smart Grid Communications}}.
  \bibinfo{pages}{214--219}.
\newblock
\showDOI{%
\url{https://doi.org/10.1109/SMARTGRID.2010.5622046}}


\bibitem[\protect\citeauthoryear{Denning}{Denning}{1975}]%
        {SecureInformationFlows}
\bibfield{author}{\bibinfo{person}{Dorothy Elizabeth~Robling Denning}.}
  \bibinfo{year}{1975}\natexlab{}.
\newblock {\em \bibinfo{title}{Secure Information Flow in Computer Systems.}}
\newblock \bibinfo{thesistype}{Ph.D. Dissertation}. \bibinfo{address}{West
  Lafayette, IN, USA}.
\newblock
\newblock
\shownote{AAI7600514.}


\bibitem[\protect\citeauthoryear{Dolev and Yao}{Dolev and Yao}{1983}]%
        {DolevYao}
\bibfield{author}{\bibinfo{person}{D. Dolev} {and} \bibinfo{person}{A. Yao}.}
  \bibinfo{year}{1983}\natexlab{}.
\newblock \showarticletitle{On the security of public key protocols}.
\newblock \bibinfo{journal}{{\em IEEE Transactions on Information Theory\/}}
  \bibinfo{volume}{29}, \bibinfo{number}{2} (\bibinfo{date}{Mar}
  \bibinfo{year}{1983}), \bibinfo{pages}{198--208}.
\newblock
\showISSN{0018-9448}
\showDOI{%
\url{https://doi.org/10.1109/TIT.1983.1056650}}


\bibitem[\protect\citeauthoryear{Gamage, McMillin, and Roth}{Gamage
  et~al\mbox{.}}{2010}]%
        {Gamage2010}
\bibfield{author}{\bibinfo{person}{T.~T. Gamage}, \bibinfo{person}{B.~M.
  McMillin}, {and} \bibinfo{person}{T.~P. Roth}.}
  \bibinfo{year}{2010}\natexlab{}.
\newblock \showarticletitle{Enforcing Information Flow Security Properties in
  Cyber-Physical Systems: A Generalized Framework Based on Compensation}. In
  \bibinfo{booktitle}{{\em 2010 IEEE 34th Annual Computer Software and
  Applications Conference Workshops}}. \bibinfo{pages}{158--163}.
\newblock
\showDOI{%
\url{https://doi.org/10.1109/COMPSACW.2010.36}}


\bibitem[\protect\citeauthoryear{Gertler}{Gertler}{1988}]%
        {Gertler}
\bibfield{author}{\bibinfo{person}{J. Gertler}.}
  \bibinfo{year}{1988}\natexlab{}.
\newblock \showarticletitle{Survey of model-based failure detection and
  isolation in complex plants}.
\newblock \bibinfo{journal}{{\em Control Systems Magazine, IEEE\/}}
  \bibinfo{volume}{8} (\bibinfo{year}{1988}), \bibinfo{pages}{3--11}.
\newblock


\bibitem[\protect\citeauthoryear{Goguen and Meseguer}{Goguen and
  Meseguer}{1982}]%
        {Noninterference}
\bibfield{author}{\bibinfo{person}{Joseph~A. Goguen} {and}
  \bibinfo{person}{Jos{\'{e}} Meseguer}.} \bibinfo{year}{1982}\natexlab{}.
\newblock \showarticletitle{Security Policies and Security Models}. In
  \bibinfo{booktitle}{{\em 1982 {IEEE} Symposium on Security and Privacy,
  Oakland, CA, USA, April 26-28, 1982}}. \bibinfo{pages}{11--20}.
\newblock
\showDOI{%
\url{https://doi.org/10.1109/SP.1982.10014}}


\bibitem[\protect\citeauthoryear{Gollmann, Gurikov, Isakov, Krotofil, Larsen,
  and Winnicki}{Gollmann et~al\mbox{.}}{2015}]%
        {CPSSecVinyl}
\bibfield{author}{\bibinfo{person}{Dieter Gollmann}, \bibinfo{person}{Pavel
  Gurikov}, \bibinfo{person}{Alexander Isakov}, \bibinfo{person}{Marina
  Krotofil}, \bibinfo{person}{Jason Larsen}, {and} \bibinfo{person}{Alexander
  Winnicki}.} \bibinfo{year}{2015}\natexlab{}.
\newblock \showarticletitle{Cyber-Physical Systems Security: Experimental
  Analysis of a Vinyl Acetate Monomer Plant}. In \bibinfo{booktitle}{{\em
  Proceedings of the 1st ACM Workshop on Cyber-Physical System Security}} {\em
  (\bibinfo{series}{CPSS '15})}. \bibinfo{publisher}{ACM},
  \bibinfo{address}{New York, NY, USA}, \bibinfo{pages}{1--12}.
\newblock
\showISBNx{978-1-4503-3448-8}
\showDOI{%
\url{https://doi.org/10.1145/2732198.2732208}}


\bibitem[\protect\citeauthoryear{Gollmann and Krotofil}{Gollmann and
  Krotofil}{2016}]%
        {CPSSec}
\bibfield{author}{\bibinfo{person}{Dieter Gollmann} {and}
  \bibinfo{person}{Marina Krotofil}.} \bibinfo{year}{2016}\natexlab{}.
\newblock \bibinfo{booktitle}{{\em Cyber-Physical Systems Security}}.
\newblock \bibinfo{publisher}{Springer Berlin Heidelberg},
  \bibinfo{address}{Berlin, Heidelberg}, \bibinfo{pages}{195--204}.
\newblock
\showISBNx{978-3-662-49301-4}
\showDOI{%
\url{https://doi.org/10.1007/978-3-662-49301-4\_14}}


\bibitem[\protect\citeauthoryear{Gustafsson}{Gustafsson}{2000}]%
        {Gustafsson}
\bibfield{author}{\bibinfo{person}{F. Gustafsson}.}
  \bibinfo{year}{2000}\natexlab{}.
\newblock \bibinfo{booktitle}{{\em Adaptive Filtering and Change Detection}}.
\newblock \bibinfo{publisher}{John Wiley and Sons, LTD}, \bibinfo{address}{West
  Sussex, Chichester, England}.
\newblock


\bibitem[\protect\citeauthoryear{Heemels and Schutter}{Heemels and
  Schutter}{2007}]%
        {ModelingAndControlOfHybridDynamicalSystems}
\bibfield{author}{\bibinfo{person}{Maurice Heemels} {and}
  \bibinfo{person}{Bart~De Schutter}.} \bibinfo{year}{2007}\natexlab{}.
\newblock \bibinfo{title}{Modeling and Control of Hybrid Dynamical Systems}.
\newblock Einhoven University of Technology.
\newblock


\bibitem[\protect\citeauthoryear{Howser and McMillin}{Howser and
  McMillin}{2014}]%
        {StuxnetOnCPS}
\bibfield{author}{\bibinfo{person}{G. Howser} {and} \bibinfo{person}{B.
  McMillin}.} \bibinfo{year}{2014}\natexlab{}.
\newblock \showarticletitle{A Modal Model of Stuxnet Attacks on Cyber-physical
  Systems: A Matter of Trust}. In \bibinfo{booktitle}{{\em 2014 Eighth
  International Conference on Software Security and Reliability (SERE)}}.
  \bibinfo{pages}{225--234}.
\newblock
\showDOI{%
\url{https://doi.org/10.1109/SERE.2014.36}}


\bibitem[\protect\citeauthoryear{Krotofil and Larsen}{Krotofil and
  Larsen}{2015}]%
        {krotofil2015rocking}
\bibfield{author}{\bibinfo{person}{Marina Krotofil} {and}
  \bibinfo{person}{Jason Larsen}.} \bibinfo{year}{2015}\natexlab{}.
\newblock \showarticletitle{Rocking the pocket book: Hacking chemical plants}.
  In \bibinfo{booktitle}{{\em DefCon Conference, DEFCON}}.
\newblock


\bibitem[\protect\citeauthoryear{Kyriakides and Polycarpou}{Kyriakides and
  Polycarpou}{2015}]%
        {Marios_Poly}
\bibfield{editor}{\bibinfo{person}{Elias Kyriakides} {and}
  \bibinfo{person}{Marios~M. Polycarpou}} (Eds.).
  \bibinfo{year}{2015}\natexlab{}.
\newblock \bibinfo{booktitle}{{\em Intelligent Monitoring, Control, and
  Security of Critical Infrastructure Systems}}. \bibinfo{series}{Studies in
  Computational Intelligence}, Vol.~\bibinfo{volume}{565}.
\newblock \bibinfo{publisher}{Springer}.
\newblock


\bibitem[\protect\citeauthoryear{Langner}{Langner}{2011}]%
        {Stuxnet}
\bibfield{author}{\bibinfo{person}{R. Langner}.}
  \bibinfo{year}{2011}\natexlab{}.
\newblock \showarticletitle{Stuxnet: Dissecting a Cyberwarfare Weapon}.
\newblock \bibinfo{journal}{{\em IEEE Security Privacy\/}} \bibinfo{volume}{9},
  \bibinfo{number}{3} (\bibinfo{date}{May} \bibinfo{year}{2011}),
  \bibinfo{pages}{49--51}.
\newblock
\showISSN{1540-7993}
\showDOI{%
\url{https://doi.org/10.1109/MSP.2011.67}}


\bibitem[\protect\citeauthoryear{Luenberger}{Luenberger}{1966}]%
        {1098323}
\bibfield{author}{\bibinfo{person}{D. Luenberger}.}
  \bibinfo{year}{1966}\natexlab{}.
\newblock \showarticletitle{Observers for multivariable systems}.
\newblock \bibinfo{journal}{{\it IEEE Trans. Automat. Control}}
  \bibinfo{volume}{11} (\bibinfo{year}{1966}), \bibinfo{pages}{190--197}.
\newblock


\bibitem[\protect\citeauthoryear{McLaughlin, Zonouz, Pohly, and
  McDaniel}{McLaughlin et~al\mbox{.}}{2014}]%
        {TSVPLC}
\bibfield{author}{\bibinfo{person}{Stephen~E. McLaughlin},
  \bibinfo{person}{Saman~A. Zonouz}, \bibinfo{person}{Devin~J. Pohly}, {and}
  \bibinfo{person}{Patrick~D. McDaniel}.} \bibinfo{year}{2014}\natexlab{}.
\newblock \showarticletitle{A Trusted Safety Verifier for Process Controller
  Code}. In \bibinfo{booktitle}{{\em 21st Annual Network and Distributed System
  Security Symposium, {NDSS} 2014, San Diego, California, USA, February 23-26,
  2014}}.
\newblock
\showURL{%
\url{http://www.internetsociety.org/doc/trusted-safety-verifier-process-controller-code}}


\bibitem[\protect\citeauthoryear{Mo, Chabukswar, and Sinopoli}{Mo
  et~al\mbox{.}}{2014}]%
        {CPSDetectingIntegrityAttacksScada}
\bibfield{author}{\bibinfo{person}{Y. Mo}, \bibinfo{person}{R. Chabukswar},
  {and} \bibinfo{person}{B. Sinopoli}.} \bibinfo{year}{2014}\natexlab{}.
\newblock \showarticletitle{Detecting Integrity Attacks on SCADA Systems}.
\newblock \bibinfo{journal}{{\em IEEE Transactions on Control Systems
  Technology\/}} \bibinfo{volume}{22}, \bibinfo{number}{4}
  (\bibinfo{date}{July} \bibinfo{year}{2014}), \bibinfo{pages}{1396--1407}.
\newblock
\showISSN{1063-6536}
\showDOI{%
\url{https://doi.org/10.1109/TCST.2013.2280899}}


\bibitem[\protect\citeauthoryear{Mo and Sinopoli}{Mo and Sinopoli}{2009}]%
        {CPSReplayAttacks}
\bibfield{author}{\bibinfo{person}{Y. Mo} {and} \bibinfo{person}{B. Sinopoli}.}
  \bibinfo{year}{2009}\natexlab{}.
\newblock \showarticletitle{Secure control against replay attacks}. In
  \bibinfo{booktitle}{{\em 2009 47th Annual Allerton Conference on
  Communication, Control, and Computing (Allerton)}}.
  \bibinfo{pages}{911--918}.
\newblock
\showDOI{%
\url{https://doi.org/10.1109/ALLERTON.2009.5394956}}


\bibitem[\protect\citeauthoryear{Mo and Sinopoli}{Mo and Sinopoli}{2010}]%
        {CPSDataInjectionAttacks}
\bibfield{author}{\bibinfo{person}{Y. Mo} {and} \bibinfo{person}{B. Sinopoli}.}
  \bibinfo{year}{2010}\natexlab{}.
\newblock \showarticletitle{False data injection attacks in control systems}.
  In \bibinfo{booktitle}{{\em Proc. 1st Workshop Secure Control Syst.,
  Stocholm, Sweeden}}.
\newblock


\bibitem[\protect\citeauthoryear{Mo and Sinopoli}{Mo and Sinopoli}{2012}]%
        {CPSIntegrityAttacks}
\bibfield{author}{\bibinfo{person}{Yilin Mo} {and} \bibinfo{person}{Bruno
  Sinopoli}.} \bibinfo{year}{2012}\natexlab{}.
\newblock \showarticletitle{Integrity Attacks on Cyber-physical Systems}. In
  \bibinfo{booktitle}{{\em Proceedings of the 1st International Conference on
  High Confidence Networked Systems}} {\em (\bibinfo{series}{HiCoNS '12})}.
  \bibinfo{publisher}{ACM}, \bibinfo{address}{New York, NY, USA},
  \bibinfo{pages}{47--54}.
\newblock
\showISBNx{978-1-4503-1263-9}
\showDOI{%
\url{https://doi.org/10.1145/2185505.2185514}}


\bibitem[\protect\citeauthoryear{Murguia and Ruths}{Murguia and Ruths}{2016a}]%
        {Carlos_Justin1}
\bibfield{author}{\bibinfo{person}{Carlos Murguia} {and}
  \bibinfo{person}{Justin Ruths}.} \bibinfo{year}{2016}\natexlab{a}.
\newblock \showarticletitle{Characterization of a CUSUM Model-Based Sensor
  Attack Detector}. In \bibinfo{booktitle}{{\em proceedings of the 55th IEEE
  Conference on Decision and Control (CDC)}}.
\newblock


\bibitem[\protect\citeauthoryear{Murguia and Ruths}{Murguia and Ruths}{2016b}]%
        {Carlos_Justin2}
\bibfield{author}{\bibinfo{person}{Carlos Murguia} {and}
  \bibinfo{person}{Justin Ruths}.} \bibinfo{year}{2016}\natexlab{b}.
\newblock \showarticletitle{CUSUM and Chi-Squared Attack Detection of
  Compromised Sensors}. In \bibinfo{booktitle}{{\em proceedings of the IEEE
  Multi-Conference on Systems and Control (MSC)}}.
\newblock


\bibitem[\protect\citeauthoryear{Murguia, van~de Wouw, and Ruths}{Murguia
  et~al\mbox{.}}{2017}]%
        {ReachableSets}
\bibfield{author}{\bibinfo{person}{Carlos Murguia}, \bibinfo{person}{Nathan
  van~de Wouw}, {and} \bibinfo{person}{Justin Ruths}.}
  \bibinfo{year}{2017}\natexlab{}.
\newblock \showarticletitle{Reachable Sets of Hidden CPS Sensor Attacks:
  Analysis and Synthesis Tools}. In \bibinfo{booktitle}{{\em IFAC 2017 World
  Congress}}.
\newblock


\bibitem[\protect\citeauthoryear{Murguia, van~de Wouw, and Ruths}{Murguia
  et~al\mbox{.}}{2016}]%
        {Carlos_Justin3}
\bibfield{author}{\bibinfo{person}{Carlos Murguia}, \bibinfo{person}{Nathan
  van~de Wouw}, {and} \bibinfo{person}{Justin Ruths}.} \bibinfo{year}{accepred,
  2016}\natexlab{}.
\newblock \showarticletitle{Reachable Sets of Hidden CPS Sensor Attacks:
  Analysis and Synthesis Tools}. In \bibinfo{booktitle}{{\em proceedings of the
  IFAC World Congress}}.
\newblock


\bibitem[\protect\citeauthoryear{Page}{Page}{1954}]%
        {Page}
\bibfield{author}{\bibinfo{person}{E. Page}.} \bibinfo{year}{1954}\natexlab{}.
\newblock \showarticletitle{Continuous Inspection Schemes}.
\newblock \bibinfo{journal}{{\em Biometrika\/}}  \bibinfo{volume}{41}
  (\bibinfo{year}{1954}), \bibinfo{pages}{100--115}.
\newblock


\bibitem[\protect\citeauthoryear{Pasqualetti, D{\"o}rfler, and
  Bullo}{Pasqualetti et~al\mbox{.}}{2013}]%
        {CPSAttackDetection}
\bibfield{author}{\bibinfo{person}{F. Pasqualetti}, \bibinfo{person}{F.
  D{\"o}rfler}, {and} \bibinfo{person}{F. Bullo}.}
  \bibinfo{year}{2013}\natexlab{}.
\newblock \showarticletitle{Attack Detection and Identification in
  Cyber-Physical Systems}.
\newblock \bibinfo{journal}{{\it IEEE Trans. Automat. Control}}
  \bibinfo{volume}{58}, \bibinfo{number}{11} (\bibinfo{date}{Nov}
  \bibinfo{year}{2013}), \bibinfo{pages}{2715--2729}.
\newblock
\showISSN{0018-9286}
\showDOI{%
\url{https://doi.org/10.1109/TAC.2013.2266831}}


\bibitem[\protect\citeauthoryear{Ramadge and Wonham}{Ramadge and
  Wonham}{1987}]%
        {doi:10.1137/0325013}
\bibfield{author}{\bibinfo{person}{P.~J. Ramadge} {and} \bibinfo{person}{W.~M.
  Wonham}.} \bibinfo{year}{1987}\natexlab{}.
\newblock \showarticletitle{Supervisory Control of a Class of Discrete Event
  Processes}.
\newblock \bibinfo{journal}{{\em SIAM Journal on Control and Optimization\/}}
  \bibinfo{volume}{25}, \bibinfo{number}{1} (\bibinfo{year}{1987}),
  \bibinfo{pages}{206--230}.
\newblock
\showDOI{%
\url{https://doi.org/10.1137/0325013}}
\showeprint{https://doi.org/10.1137/0325013}


\bibitem[\protect\citeauthoryear{Rocchetto and Tippenhauer}{Rocchetto and
  Tippenhauer}{2016}]%
        {CPSDolevYao}
\bibfield{author}{\bibinfo{person}{Marco Rocchetto} {and}
  \bibinfo{person}{Nils~Ole Tippenhauer}.} \bibinfo{year}{2016}\natexlab{}.
\newblock \bibinfo{booktitle}{{\em CPDY: Extending the Dolev-Yao Attacker with
  Physical-Layer Interactions}}.
\newblock \bibinfo{publisher}{Springer International Publishing},
  \bibinfo{address}{Cham}, \bibinfo{pages}{175--192}.
\newblock
\showISBNx{978-3-319-47846-3}
\showDOI{%
\url{https://doi.org/10.1007/978-3-319-47846-3\_12}}


\bibitem[\protect\citeauthoryear{Ross}{Ross}{2006}]%
        {Ross}
\bibfield{author}{\bibinfo{person}{M. Ross}.} \bibinfo{year}{2006}\natexlab{}.
\newblock \bibinfo{booktitle}{{\em Introduction to Probability Models, Ninth
  Edition}}.
\newblock \bibinfo{publisher}{Academic Press, Inc.}, \bibinfo{address}{Orlando,
  FL, USA}.
\newblock


\bibitem[\protect\citeauthoryear{Rushby}{Rushby}{1992}]%
        {Rushby92}
\bibfield{author}{\bibinfo{person}{John Rushby}.}
  \bibinfo{year}{1992}\natexlab{}.
\newblock \bibinfo{booktitle}{{\em Noninterference, Transitivity, and
  Channel-Control Security Policies}}.
\newblock \bibinfo{type}{{T}echnical {R}eport}. \bibinfo{institution}{SRI
  International}.
\newblock
\showURL{%
\url{http://www.csl.sri.com/papers/csl-92-2/}}


\bibitem[\protect\citeauthoryear{Sabelfeld and Myers}{Sabelfeld and
  Myers}{2003}]%
        {LanguageBasedInformationFlowSecurity}
\bibfield{author}{\bibinfo{person}{A. Sabelfeld} {and} \bibinfo{person}{A.~C.
  Myers}.} \bibinfo{year}{2003}\natexlab{}.
\newblock \showarticletitle{Language-based information-flow security}.
\newblock \bibinfo{journal}{{\em IEEE Journal on Selected Areas in
  Communications\/}} \bibinfo{volume}{21}, \bibinfo{number}{1}
  (\bibinfo{date}{Jan} \bibinfo{year}{2003}), \bibinfo{pages}{5--19}.
\newblock
\showISSN{0733-8716}
\showDOI{%
\url{https://doi.org/10.1109/JSAC.2002.806121}}


\bibitem[\protect\citeauthoryear{Smith}{Smith}{2011}]%
        {CPSCovertAttacks}
\bibfield{author}{\bibinfo{person}{Roy~S. Smith}.}
  \bibinfo{year}{2011}\natexlab{}.
\newblock \showarticletitle{A Decoupled Feedback Structure for Covertly
  Appropriating Networked Control Systems}.
\newblock \bibinfo{journal}{{\em {IFAC} Proceedings Volumes\/}}
  \bibinfo{volume}{44}, \bibinfo{number}{1} (\bibinfo{year}{2011}),
  \bibinfo{pages}{90 -- 95}.
\newblock
\showISSN{1474-6670}
\showDOI{%
\url{https://doi.org/10.3182/20110828-6-IT-1002.01721}}
\newblock
\shownote{18th {IFAC} World Congress.}


\bibitem[\protect\citeauthoryear{Sutherland}{Sutherland}{1986}]%
        {Nondeducibility}
\bibfield{author}{\bibinfo{person}{D. Sutherland}.}
  \bibinfo{year}{1986}\natexlab{}.
\newblock \showarticletitle{A Model of Information}. In
  \bibinfo{booktitle}{{\em Proccedings of the National Computer Security
  Conference}}. \bibinfo{pages}{175--183}.
\newblock


\bibitem[\protect\citeauthoryear{Urbina, Giraldo, Cardenas, Tippenhauer,
  Valente, Faisal, Ruths, Candell, and Sandberg}{Urbina et~al\mbox{.}}{2016}]%
        {LimitingImpactStealthyAttacks}
\bibfield{author}{\bibinfo{person}{David~I. Urbina}, \bibinfo{person}{Jairo~A.
  Giraldo}, \bibinfo{person}{Alvaro~A. Cardenas}, \bibinfo{person}{Nils~Ole
  Tippenhauer}, \bibinfo{person}{Junia Valente}, \bibinfo{person}{Mustafa
  Faisal}, \bibinfo{person}{Justin Ruths}, \bibinfo{person}{Richard Candell},
  {and} \bibinfo{person}{Henrik Sandberg}.} \bibinfo{year}{2016}\natexlab{}.
\newblock \showarticletitle{Limiting the Impact of Stealthy Attacks on
  Industrial Control Systems}. In \bibinfo{booktitle}{{\em Proceedings of the
  2016 ACM SIGSAC Conference on Computer and Communications Security}} {\em
  (\bibinfo{series}{CCS '16})}. \bibinfo{publisher}{ACM}, \bibinfo{address}{New
  York, NY, USA}, \bibinfo{pages}{1092--1105}.
\newblock
\showISBNx{978-1-4503-4139-4}
\showDOI{%
\url{https://doi.org/10.1145/2976749.2978388}}


\bibitem[\protect\citeauthoryear{van Dobben~de Bruyn}{van Dobben~de
  Bruyn}{1968}]%
        {Dobben}
\bibfield{author}{\bibinfo{person}{C.S. van Dobben~de Bruyn}.}
  \bibinfo{year}{1968}\natexlab{}.
\newblock \bibinfo{booktitle}{{\em Cumulative sum tests : theory and
  practice}}.
\newblock \bibinfo{publisher}{London : Griffin}.
\newblock


\bibitem[\protect\citeauthoryear{Wald}{Wald}{1945}]%
        {wald}
\bibfield{author}{\bibinfo{person}{A. Wald}.} \bibinfo{year}{1945}\natexlab{}.
\newblock \showarticletitle{Sequential Tests of Statistical Hypotheses}.
\newblock \bibinfo{journal}{{\em Ann. Math. Statist.\/}}  \bibinfo{volume}{16}
  (\bibinfo{year}{1945}), \bibinfo{pages}{117--186}.
\newblock


\bibitem[\protect\citeauthoryear{Weerakkody and Sinopoli}{Weerakkody and
  Sinopoli}{2015}]%
        {Weerakkody}
\bibfield{author}{\bibinfo{person}{S. Weerakkody} {and} \bibinfo{person}{B.
  Sinopoli}.} \bibinfo{year}{2015}\natexlab{}.
\newblock \showarticletitle{Detecting integrity attacks on control systems
  using a moving target approach}. In \bibinfo{booktitle}{{\em 2015 54th IEEE
  Conference on Decision and Control (CDC)}}. \bibinfo{pages}{5820--5826}.
\newblock


\bibitem[\protect\citeauthoryear{Weerakkody, Sinopoli, Kar, and
  Datta}{Weerakkody et~al\mbox{.}}{2016}]%
        {IFCPSSec}
\bibfield{author}{\bibinfo{person}{S. Weerakkody}, \bibinfo{person}{B.
  Sinopoli}, \bibinfo{person}{S. Kar}, {and} \bibinfo{person}{A. Datta}.}
  \bibinfo{year}{2016}\natexlab{}.
\newblock \showarticletitle{Information flow for security in control systems}.
  In \bibinfo{booktitle}{{\em 2016 IEEE 55th Conference on Decision and Control
  (CDC)}}. \bibinfo{pages}{5065--5072}.
\newblock
\showDOI{%
\url{https://doi.org/10.1109/CDC.2016.7799044}}


\bibitem[\protect\citeauthoryear{Willsky}{Willsky}{1976}]%
        {Willsky}
\bibfield{author}{\bibinfo{person}{A. Willsky}.}
  \bibinfo{year}{1976}\natexlab{}.
\newblock \showarticletitle{A survey of design methods for failure detection in
  dynamic systems}.
\newblock \bibinfo{journal}{{\em Automatica\/}}  \bibinfo{volume}{12}
  (\bibinfo{year}{1976}), \bibinfo{pages}{601 -- 611}.
\newblock


\end{thebibliography}
\appendix
\section{Appendix}
 {
{
\subsection{Residuals and Hypothesis Testing}

Under certain properties of the matrices $A$ and $C$  (\emph{detectability} \cite{Astrom}), the observer gain $L$ can be designed such that $E[e_k] \rightarrow 0$ as $k \rightarrow \infty$ (where $E[\cdot]$ denotes expectation) in the absence of attacks. Moreover, under the same properties, the covariance matrix  $P_k:= E[e_ke_k^T]$ converges to steady state (in the absence of attacks) in the sense that $\lim_{k \rightarrow \infty} P_k = P$ exists, see \cite{Astrom}. For $\delta^u_k=\delta^y_k=\mathbf{0}$ and appropriately selected $L$, it can be verified that the asymptotic covariance matrix $P = \lim_{k \rightarrow \infty} P_k$ is given by the solution $P$ of the following Lyapunov equation:
\begin{align}
&(F-LC)P(F-LC)^T - P + R_1 + LR_2L^T = \mathbf{0},\label{24}
\end{align}
where $\mathbf{0}$ denotes the zero matrix of appropriate dimensions. It is assumed that the system has reached steady state before an attack occurs. Consider the residual difference equation in (\ref{26}). Then, it can be easily shown that, if there are no attacks, the mean of the residual is
\begin{equation}
E[r_{k+1}] = CE[e_{k+1}] + E[\eta_{k+1}] = \mathbf{0}_{m \times 1},  \label{27} \\
\end{equation}
and its asymptotic covariance matrix is given by
\begin{align}
\Sigma := E[r_{k+1}r_{k+1}^T] &= CPC^T + R_2,\label{28}
\end{align}
where $P$ is the asymptotic covariance matrix of the estimation error $e_k$ solution of (\ref{24}) and $R_2$ is the sensor noise covariance matrix. For this residual, we identify two hypotheses to be tested: $\mathcal{H}_0$ the \emph{normal mode} (no attacks) and $\mathcal{H}_1$ the \emph{faulty mode} (with faults/attacks). Then, we have
\begin{center}
\begin{tabular}{ c c c }
 $\mathcal{H}_0: \left\{
\begin{array}{ll}
E[r_k] = \mathbf{0}, \text{ and}\label{29} \\[.5mm]
E[r_kr_k^T] = \Sigma,
\end{array}
\right.$ & $\mathcal{H}_1: \left\{
\begin{array}{ll}
E[r_k] \neq \mathbf{0}, \text{or}  \label{30} \\[.5mm]
E[r_kr_k^T] \neq \Sigma,
\end{array}
\right.$
\end{tabular}
\end{center}
The objective of hypothesis testing is to distinguish between $\mathcal{H}_0$ and $\mathcal{H}_1$. Several hypothesis testing techniques can be used to examine the residual and subsequently detect faults/attacks. For instance, Sequential Probability Ratio Testing (SPRT) \cite{wald,Willsky}, Cumulative Sum (CUSUM) \cite{Gustafsson,Page}, Genera\-lized Likelihood Ratio (GLR) testing \cite{Basseville}, Compound Scalar Testing (CST) \cite{Gertler}, etc. Each of these techniques has its own advantages and disadvantages depending on the scenario. Arguably, the most utilised one, due to its simplicity, is a particular case of CST, called \emph{bad-data} change detection procedure. 
\subsection{Attack Detection via Bad-Data Procedure}
The input to any detection procedure is a \emph{distance measure} $z_k \in \Real$, with $k \in \Nat$, i.e., a measure of how deviated the estimator is from the sensor measurements. We employ distance measures any time we test to distinguish between $\mathcal{H}_0$ and $\mathcal{H}_1$. Here, we assume there is a dedicated detector on each sensor. Throughout the rest of this section we reserve the index $i$ to denote the sensor/detector, $i\in\mathcal{I}:=\{1,2,\dots,m\}$. Thus, we can partition the attacked output vector as $\bar{y}_k=\text{col}(\bar{y}_{k,1},\ldots,\bar{y}_{k,m})$ where $\bar{y}_{k,i} \in \Real$ denotes the $i$-th entry of $\bar{y}_k \in \Real^m$; then
\begin{equation}
\bar{y}_{k,i}=C_{i}x_k + \eta_{k,i} + \delta_{k,i}^y,\label{31}
\end{equation}
with $C_{i}$ being the $i$-th row of $C$ and $\eta_{k,i}$ and $\delta_{k,i}^y$ denoting the $i$-th entries of $\eta_k$ and $\delta_k^y$, respectively. The bad-data procedure uses the absolute value of the entries of the residual sequence as distance measures:
\begin{equation}
z_{k,i} := |r_{k,i}| = |C_{i}e_k + \eta_{k,i} + \delta_{k,i}^y|.\label{32}
\end{equation}
If there are no attacks,  $r_{k,i}$ follows a normal distribution with zero mean and variance $\sigma_{i}^2$, where $\sigma_{i}^2$ denotes the $i$-th entry of the diagonal of the covariance matrix $\Sigma$. Hence, $\delta_k^y=\delta_k^u=\mathbf{0}$ implies that $z_{k,i} = |r_{k,i}|$ follows a \emph{half-normal distribution} \cite{Ross} with
\begin{equation}
E\big[ \hspace{.25mm}|r_{k,i}| \hspace{.25mm} \big] = \frac{\sqrt{2}}{\sqrt{\pi}}\sigma_{i} \text{ and }  \label{38}
\text{var}\big[  \hspace{.25mm} |r_{k,i}| \hspace{.25mm}  \big] = \sigma_{i}^2 \big( 1 - \frac{2}{\pi} \big).
\end{equation}
Next, having presented the notion of distance measure, we introduce the bad-data procedure.

\noindent\rule{\hsize}{1pt}\vspace{.2mm}
\textbf{Bad-Data Procedure:}
\begin{equation}\label{baddata}
\text{If \ } |r_{k,i}|  > \alpha_i, \hspace{2mm} \tilde{k}_i = k, \text{ \ \ } i \in \mathcal{I}.
\end{equation}
\textbf{Design parameter:} threshold $\alpha_i \in \Real_{>0}$.\\[1mm]
\textbf{Output:} alarm time(s) $\tilde{k}_i$.

\noindent\rule{\hsize}{1pt}\\[.2mm]
The idea is that alarms are triggered if $|r_{k,i}|$ exceeds the threshold $\alpha_i$. The parameter $\alpha_i$ is selected to satisfy a required false alarm rate $\mathcal{A}^*_i$. The occurrence of an alarm in the bad-data when there are no attacks to the CPS is referred to as a false alarm. Operators need to tune this false alarm rate depending on the application. To do this, the thresholds $\alpha_i$ must be selected to fulfill a \emph{desired false alarm rate} $\mathcal{A}^*_i$. Let $\mathcal{A} \in [0,1]$ denote the \emph{false alarm rate} of the bad-data procedure defined as the expected proportion of observations which are false alarms, i.e., $\mathcal{A}_i:=\text{pr}[z_{k,i} \geq \alpha_i]$, where $\text{pr}[\cdot]$ denotes probability, see \cite{Dobben} and \cite{Adams}.
}}

\end{document}